\DeclareMathOperator*{\argmax}{\arg\!\max}
\begin{document}
\def\pbb#1{ \mathbb{P}\left\{  #1 \right\}}
%
%
%
%
%

%
%
%

\title{Fundamental Limits of Pooled-DNA Sequencing}
\author{
A.~Najafi
\thanks{Authors are with the Department of Computer Engineering and AICT Innovation Center, Sharif University of Technology, Tehran, Iran. (najafy@ce.sharif.edu and \{motahari,rabiee\}@sharif.edu)}
\and 
D.~Nashta-ali
\thanks{Authors are with the Department of Electrical Engineering and AICT Innovation Center, Sharif University of Technology, Tehran, Iran. (\{nashtaali,khani\}@ee.sharif.edu and khalaj@sharif.edu)}
\and 
S.~A.~Motahari
\footnotemark[1]
\and 
M.~Khani
\footnotemark[2]
\and 
B.~H.~Khalaj
\footnotemark[2]
\and 
H.~R.~Rabiee
\footnotemark[1]
}

\maketitle

\newtheorem{thm}{Theorem}[section]
\newtheorem{thm2}{Theorem}
\newtheorem{corl}[thm2]{Corollary}
\newtheorem{note}[thm2]{Note}
\newtheorem{definition}{Definition}
\newtheorem{lemma}{Lemma}

\begin{abstract}
In this paper, fundamental limits in sequencing of a set of closely related DNA molecules are addressed. This problem is called pooled-DNA sequencing which encompasses many interesting problems such as haplotype phasing, metageomics, and conventional pooled-DNA sequencing in the absence of tagging. From an information theoretic point of view, we have proposed fundamental limits on the number and length of DNA reads in order to achieve a reliable assembly of all the pooled DNA sequences. In particular, pooled-DNA sequencing from both noiseless and noisy reads are investigated in this paper. In the noiseless case, necessary and sufficient conditions on perfect assembly are derived. Moreover, asymptotically tight lower and upper bounds on the error probability of correct assembly are obtained under a biologically plausible probabilistic model. For the noisy case, we have proposed two novel DNA read denoising methods, as well as corresponding upper bounds on assembly error probabilities. It has been shown that, under mild circumstances, the performance of the reliable assembly converges to that of the noiseless regime when, for a given read length, the number of DNA reads is sufficiently large. Interestingly, the emergence of long DNA read technologies in recent years envisions the applicability of our results in real-world applications.
\end{abstract}

\section{Introduction}

Next Generation Sequencing (NGS) technologies paved the way for heterogeneous data extraction from different biological samples. Genomic datasets are of fundamental importance as DNA molecules are responsible to encode genetic instructions that are necessary for all the organisms to function. These datasets are used to recover the exact nucleotide composition of a set of DNA molecules, the so called DNA sequencing. Sequencing a set of DNA molecules can be categorized into two extreme cases. In the first case, the DNA molecules are evolutionary diverged enough such that it is safe to assume the problem is equivalent to sequencing a single DNA molecule with the total length of all the molecules. An evident example of such scenarios is the presence of numerous chromosomes within a cell which considerably differ from each other in terms of nucleotide content. In \cite{motahari2013information}, \cite{bresler2013optimal} and \cite{li1990towards} fundamental limits of sequencing of a single DNA molecule are obtained.

In the second case, DNA molecules are sampled from a population of closely related individuals where the molecules  mainly differ by Single Nucleotide Polymorphisms (SNPs). This paper aims  to obtain fundamental limits on the number and length of DNA  reads where exact reconstruction of all molecules is possible and reliable. Emergence of long DNA read technologies in the recent years has provided DNA reads with lengths up to almost $100$ kilo base-pair (kbp) \cite{laver2015assessing}, \cite{meyer2010illumina}. Using long DNA reads makes it possible for researchers to go beyond conventional limitations in a variety of computational methods in bioinformatics, such as haplotype phasing and DNA read alignment. Throughout this paper, sequencing of closely related DNA molecules is called pooled-DNA sequencing.

In pooled-DNA sequencing, the individual molecules may have been pooled by nature or by design. The former case can be approached by capturing each individual molecule for separate sequencing which is not a cost effective strategy \cite{cutler2010pool}. Even if the molecules are available separately, in the latter case, the molecules are pooled together to reduce the overall cost of sequencing. This can be done by reduction in overhead expenses through library preparation of a number of individuals simultaneously \cite{cao2016combinatorial}, \cite{meyer2010illumina}.

Pooled-DNA sequencing has diverse applications in modern biology and medicine. In cancer genomics, exact isolation of healthy tissues from cancerous cells can not be guaranteed during sample extraction. In sequencing of bacteria populations, different colonies cannot be easily separated, the so called metagenomics \cite{albertsen2013genome}. More importantly, in diploid individuals such as humans, somatic cells contain two homologous autosomal chromosomes that need to be sequenced, the so called haplotype phasing \cite{browning2011haplotype}.

Computational approaches to the problem of pooled-DNA sequencing has several advantages over experimental approaches \cite{rhee2016survey}, \cite{xu2002effectiveness}. On the computational side, a number of similar issues have been addressed in earlier works. Here, we consider the ones that are more relevant to this paper. Haplotype phasing is addressed in many papers, mostly through algorithmic approaches; see \cite{browning2011haplotype} for a review. In \cite{he2010optimal}, authors have shown that the optimal algorithm for haplotype assembly can be modeled as a dynamic programming problem. In \cite{delaneau2013improved}, authors have introduced SHAPEIT2, a statistical tool which adopts a Markov model to define the space of haplotypes which are consistent with given genotypes in a whole chromosome. A coverage bound based on the problem from the perspective of decoding convolutional codes is proposed in \cite{kamath2015optimal}. ParticleHap is also an algorithm proposed in \cite{ahn2015joint} to address the assembly problem with joint inference of genotypes generating the most likely haplotypes. Recently, another tool for the haplotype phasing problem using semi-definite programming is introduced in \cite{das2015sdhap}, which can be applied to polyploids as well. Many haplotype phasing methods proposed so far considered cohorts with nominally unrelated individuals, while in \cite{o2014general} a general framework for phasing of cohorts that contain some levels of relatedness is proposed. 

The aim of this paper is to show that it is feasible to reconstruct all the individual molecules if certain conditions hold. In particular, we will show that if the number and length of DNA reads are above some specific thresholds, which are functions of DNA sequence statistics, then reconstruction is possible.

This paper is organized as follows. 
In Section \ref{sec:mainmodel}, the mathematical model underlying the pooled-DNA sequencing problem  from DNA reads is introduced. We have presented a summary of our results and main contributions in Section \ref{sec:results}.
In Section \ref{sec:noiseless}, we have derived necessary and sufficient conditions for unique and correct assembly of all molecules in the noiseless regime. In this regard, analytic upper and lower bounds on the error probability of the reliable assembly  are obtained and asymptotic behaviors of the derived bounds are provided. Section \ref{sec:noisy} is devoted to analyze the problem of correct assembly from noisy DNA reads. Accordingly, we have formulated two upper bounds on the assembly error probability in the noisy regime via exploiting two novel denoising methods, i.e. maximum likelihood denoising and graph-based techniques motivated by community detection algorithms. Section \ref{sec:conclusion} concludes the paper.


\section{Basic Definitions and Notations}
\label{sec:mainmodel}

In this paper, we are interested in collective sequencing of a population of individuals from a given specie. It is desirable to identify the genome of each individual within the population using a single run of a sequencing machine, through the so-called pooled-DNA sequencing. In this section, we first provide a formal definition of the problem in addition to statistical models employed for analytic formulations.

\subsection{Population Structure}
We consider a population of $M$ distinct individuals where a reference genome of length $G$ is already sequenced and available. Genetic variations among individuals and the reference genome are assumed to be originated solely from Single Nucleotide Polymorphisms (SNPs). SNPs are a group of randomly spread nucleotides across the genome that differ in independent individuals with a high probability ($> 1\%$). It should be noted that the effect of Linkage Disequilibrium (LD) is ignored in this study and SNPs are assumed to be generated independently with respect to each other and also other individuals of the species. The frequency vector usually consists of only two non-zero elements, i.e. major and minor allele frequencies.

\subsection{Sequencing Method}
For the purpose of library preparation, $N$ distinct DNA fragments are extracted randomly and uniformly from each individual. In the current study, we simply assume DNA reads are not tagged and hence no information regarding the membership of fragments is available. All the fragments are pooled together and fed to a sequencing machine to produce single-end reads with length $L$. In Section \ref{sec:noiseless}, we assume sequencing machine does not produce any mismatch or indel errors while producing the reads. However, Section \ref{sec:noisy} assumes that DNA reads are altered with an error probability of $\epsilon$.


\subsection{Assembly Problem}
Recovering the genomes of all pooled individuals from the reference genome and $MN$ reads generated by the sequencing machine is the assembly problem. We would like to determine sufficient conditions for $L$ and $N$ such that for a given set of parameters $M$, $G$, $\epsilon$ and DNA sequence statistics the assembly of genomes belonging to each of $M$ individuals can be carried out uniquely and correctly. In this paper, we assume that all reads are mapped uniquely and correctly to the reference genome. Although this assumption is far from reality for short reads, it is quite valid for long and very long reads which are the main focus of this paper.


\subsection{ Probabilistic Modeling}\label{sec:model}

We exploit a biologically plausible probabilistic model to mathematically formulate our problem and assumptions. The latent independent SNP set of the $m$th individual is denoted as $\mathcal{S}_m=\left\{\mathcal{S}^{\left(1\right)}_m, \cdots, \mathcal{S}^{\left(S\right)}_m\right\}$ for all $m\in\left\{1, 2, \ldots, M\right\}$, where $S$ denotes the total number of SNPs. Also, $\mathcal{S}^{(s)}_m\in\left\{A,C,G,T\right\}$ for all $s$ and $m$. Moreover, we suppose that the loci of SNPs are known and randomly spread across the reference genome according to a {\it {Poisson}} process with rate $p$. Consequently, we have $S\simeq Gp$. The reason behind the {\it {Poisson}} assumption lies in the fact that SNP-causing mutations have happened uniformly and independently across the whole genome during the course of evolution. SNP values are assumed to be drawn from $S$ multinomial distributions with frequency vector $\mathbf{q}_s=\left(q_{s,A}, q_{s,C}, q_{s,G}, q_{s,T}\right)$, such that $q_{s,b}$ indicates the frequency of base $b\in\left\{A, C, G, T\right\}$ for the $s$th SNP. For the sake of simplicity, different SNPs in a single individual, and also all SNPs across different individuals are assumed to be statistically independent which indicates that  linkage disequilibrium and mutual ancestral properties are not taken into account.

The set of pooled DNA fragments are denoted by $\mathcal{R}=\left\{r_1,\ldots,r_{NM}\right\}$, where $r_m$ represents the $m$th sequenced DNA read. For all $r\in\mathcal{R}$, the read $r$ is assumed to be correctly aligned to the reference genome. In this regard, DNA read arrivals for each individual will be known and randomly spread across the genome according to a {\it {Poisson}} process with rate $\lambda=N/G$. Arrivals for different individuals are assumed to be independent from each other. Regarding the DNA reads, we will consider two regimes in this paper, i.e. noiseless and noisy regimes. In the former case, it has been assumed that DNA reads do not contain any sequencing error (mismatch or indel). In the noisy case, it has been assumed that sequencing system alters the nucleotide content of DNA reads. For simplicity, we assume that sequencing error does not produce any new alleles and simply transforms a minor allele into a major one and vice versa, both with probability of $\epsilon$, $0<\epsilon\leq 0.5$. In cases where an error event produces a new allele in a read, the  allele will be randomly mapped to either minor or major alleles.

The ultimate goal of this study is to investigate the conditions under which inference of  the hidden SNP sets for all individuals can be carried out correctly and without ambiguity. Also, we present a number of procedures for multiple genome assembly in such cases. Moreover, we have shown that there are conditions under which unique genome assembly is almost impossible regardless of the algorithms being employed.

\section{Summary of Results}
\label{sec:results}

\begin{figure}[t]
\centering
	\begin{subfigure}[b]{0.49\textwidth}
		\includegraphics[trim=0.7cm 0.4cm 0.8cm 0.8cm,clip,width=\textwidth]{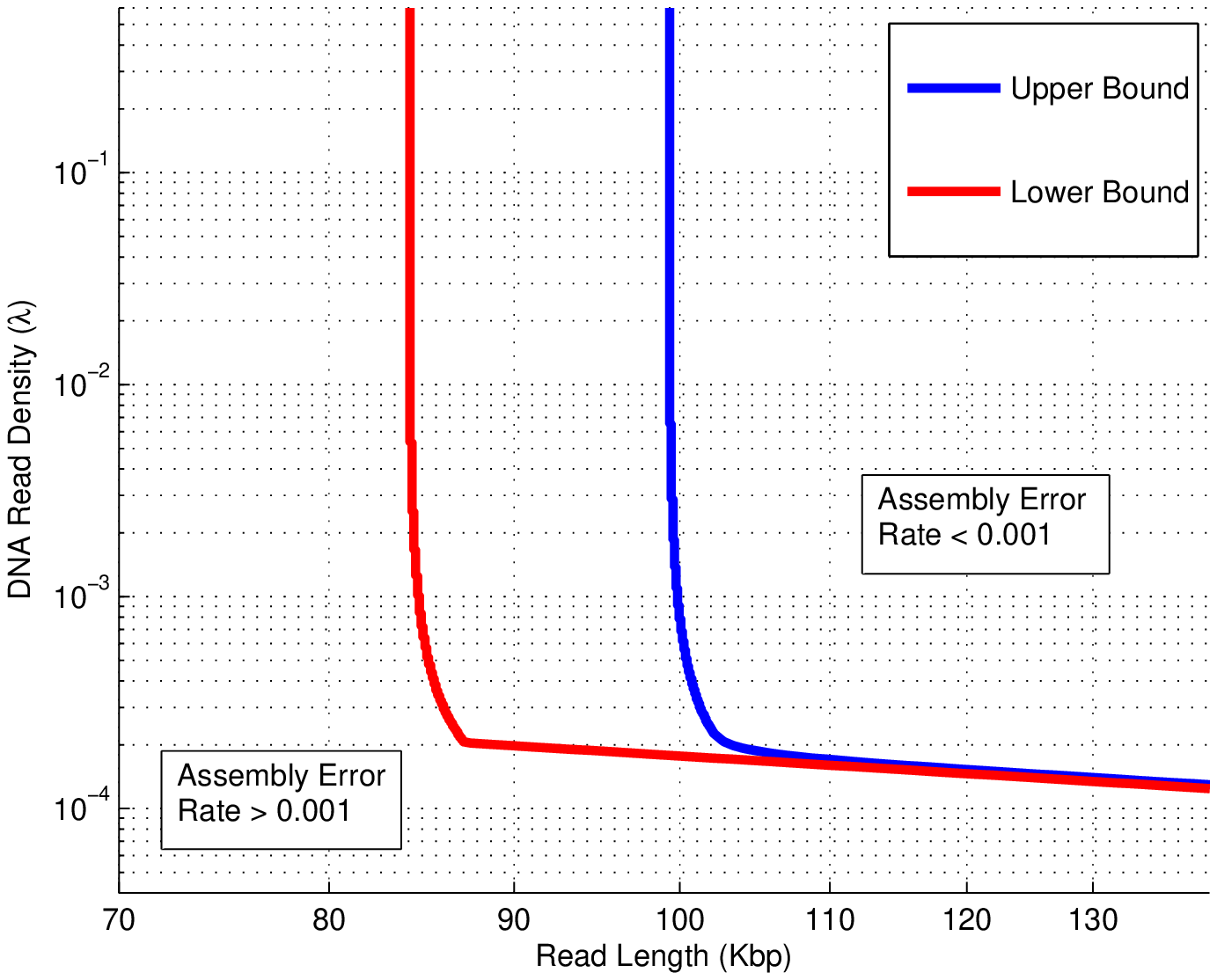}
		\caption{$M=2$ individuals}
		\label{fig:a:assemblyRegion}
	\end{subfigure}
	\begin{subfigure}[b]{0.49\textwidth}
		\includegraphics[trim=0.7cm 0.4cm 0.8cm 0.8cm,clip,width=\textwidth]{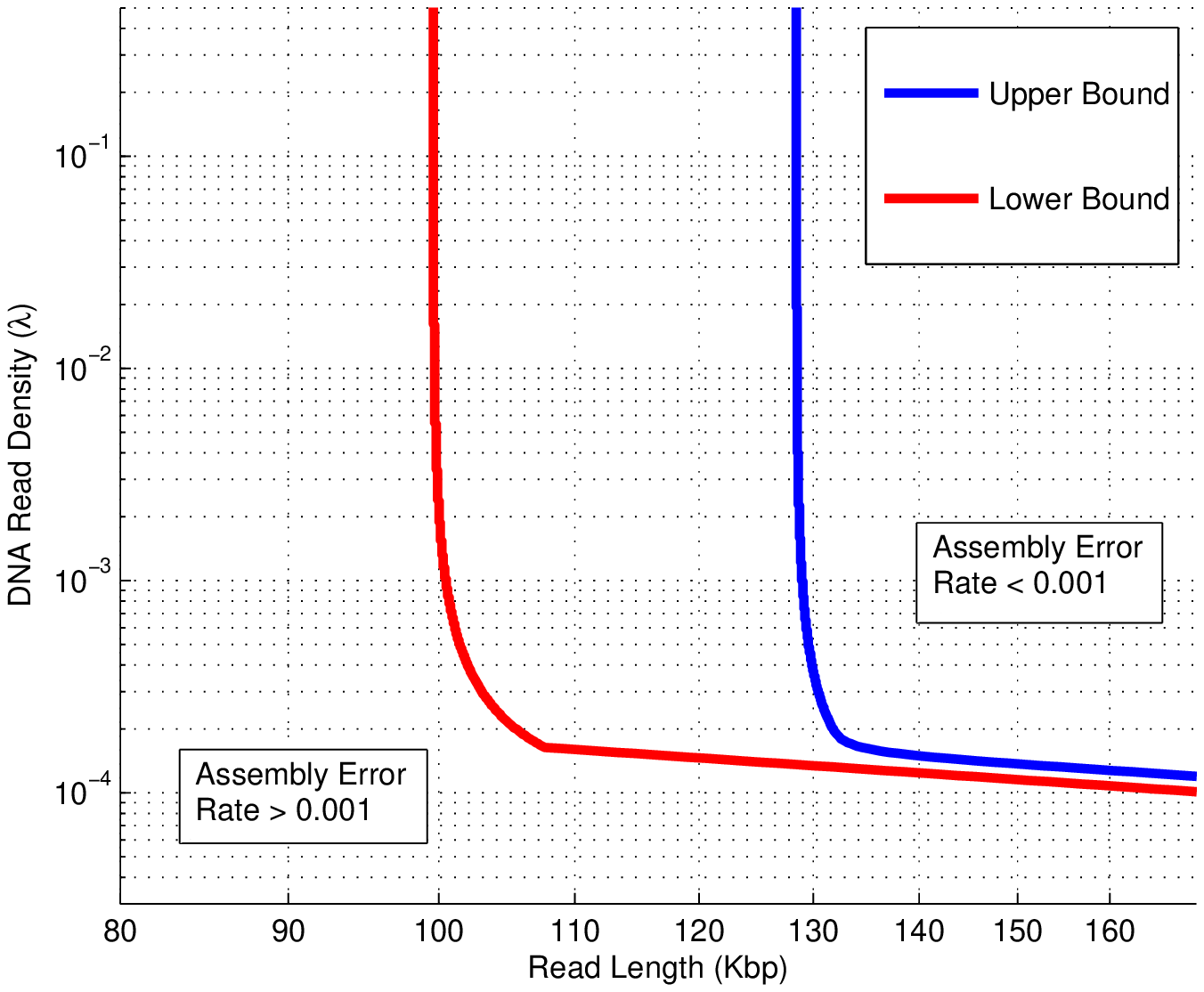}
		\caption{$M=20$ individuals}
		\label{fig:b:assemblyRegion}
	\end{subfigure}
	\caption{Assembly regions corresponding to (a) $M=2$, and (b) $M=20$ individuals, respectively. The areas beyond the lower bounds have an assembly error rate of equal or greater than $0.001$, while the regions beyond the upper bound guarantee an assembly error rate less then $0.001$.}
\label{fig:assemblyRegion}
\end{figure}
\begin{figure}[t]
\centering
	\includegraphics[trim=0cm 0cm 0cm 0cm,clip,width=0.8\textwidth]{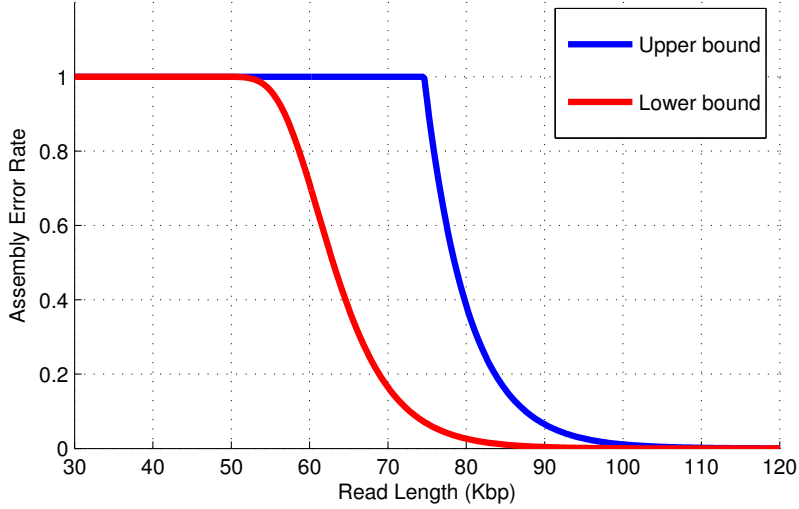}	
	\caption{Theoretical lower and upper bounds for assembly error rate in case of $M=2$ individuals. Sequencing depth is fixed regardless of read length $L$ and equals $\lambda L=43$. As can be seen, a sharp phase transition in assembly error rate is evident as the read length is being increased.}
\label{fig:phaseTransition}
\end{figure}

In the noiseless case, our main contribution includes necessary and sufficient conditions for unique and correct genome assembly of all individuals, in addition to respective lower and upper bounds on the assembly error probability. Let us denote $\mathcal{E}$ as the error event in genome assembly in the noiseless regime. Then, in Section \ref{sec:noiseless} it is shown that when DNA read density $\lambda$ is above some threshold and DNA read length $L$ is sufficiently large, the following inequalities hold:
\begin{equation}
\frac{G\left(M-1\right)}{L}
e^{-p\left(1-\eta\right)L}
\leq
\pbb{\mathcal{E}}
\leq
\frac{1}{2}GM^2p\left(1-\eta\right)e^{-p\left(1-\eta\right)L},
\end{equation}
where $\eta\triangleq\mathbb{E}_s\left\{\sum_{b\in\left\{A,C,G,T\right\}}q^2_{s,b}\right\}$ and is an attribute of genome statistics. Exact lower and upper bounds for the non-asymptotic scenario can be found in Section \ref{sec:noiseless} as well.

As genome length $G$ is increased while the number of individuals $M$ is fixed, the lower and upper bounds become tight and almost coincide with each other. However, for a practical configuration of parameters such as the ones given in Fig. \ref{fig:assemblyRegion}, the theoretical gap between lower and upper bounds becomes significant. In Fig. \ref{fig:assemblyRegion} we have demonstrated lower and upper bounds of the assembly region in $L$-$\lambda$ plain, where the bounds correspond to assembly error rate of $\pbb{\mathcal{E}}=0.001$. The number of individuals is assumed to be $M=2$ in Fig. \ref{fig:a:assemblyRegion}, and $M=20$ in Fig. \ref{fig:b:assemblyRegion}, respectively. In this simulation, we have assumed an SNP rate of $p=0.001$ and an effective minor allele frequency of $10\%$, i.e. $\eta=0.82$, where this numerical configuration corresponds to human DNA statistics. It is clear that when $\lambda\rightarrow\infty$, DNA read length $L$ must be above some critical length in order to ensure that the assembly error rate remains sufficiently small, i.e. $\pbb{\mathcal{E}}<0.001$. Evidently, for many other species of interest such as bacteria populations, the minimum required read length is significantly smaller due to their higher SNP rate $p$ and smaller genome length $G$.

Fig. \ref{fig:phaseTransition} shows the phase transition of the lower and upper bounds of the assembly error rate $\pbb{\mathcal{E}}$ when sequencing depth is fixed and read length $L$ is increased. The upper and lower bounds correspond to the case with $M=2$ individuals, and the sequencing depth is chosen as $\lambda L=43$ which is a common value in real-world applications. To summarize, the key observation is existence of a rather sharp phase transition as $L$ is increased in the case of both upper and lower bounds.

\begin{figure}[t]
\centering
	\includegraphics[trim=1.2cm 0.5cm 1.2cm 1.2cm,clip,width=0.8\textwidth]{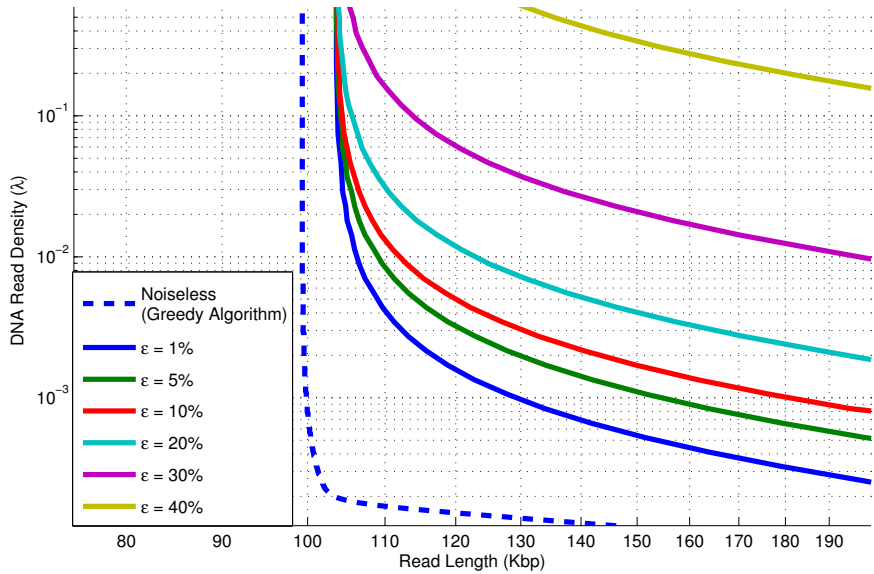}
	\caption{Upper bounds of assembly region in the noisy scenario, when maximum likelihood denoising algorithm is employed. Borders in $\lambda$-$L$ correspond to assembly error rate of $0.001$. Region upper bounds are depicted for sequencing error rates of  $1\%$, $5\%$, $10\%$, $20\%$, $30\%$ and $40\%$, respectively. Upper bound of assembly region corresponding to assembly error rate of $0.001$ for the noiseless case (greedy algorithm) is demonstrated for comparison. All sequencing error rates strictly smaller than $\epsilon=50\%$ can be handled via ML denoising as long as sufficient sequencing depths are provided.}
	\label{fig:MLAsemblyRegion}
\end{figure}
\begin{figure}[t]
\centering
	\includegraphics[trim=0.5cm 0cm 1cm 0.5cm,clip,width=0.7\textwidth]{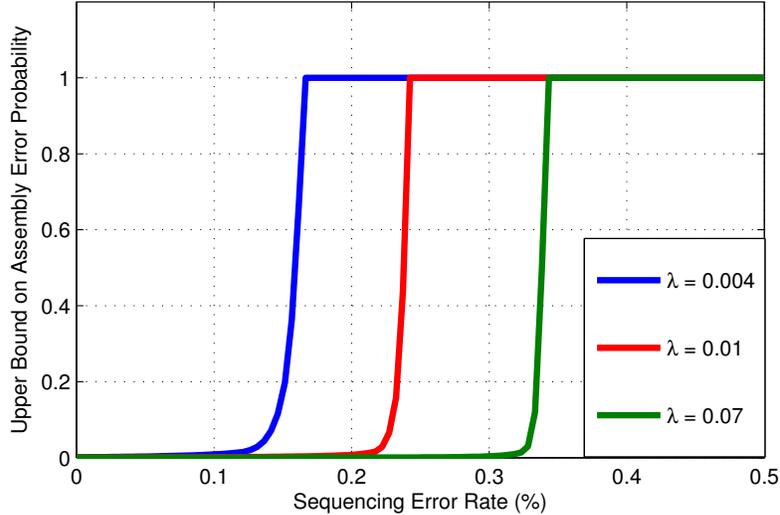}
	\caption{Upper bound on assembly error rate as a function of sequencing error rate $\epsilon$ for three DNA read density ($\lambda$) values and fixed DNA read length of $L=110$kbp. As can be seen drastic phase transitions can be observed as one increases the sequencing error. Large values of $\lambda$ correspond to further phase transitions.}
	\label{fig:MLNoisePerf}
\end{figure}

In the noisy regime, we have proposed a set of sufficient conditions for unique and reliable assembly of all the genomes. In addition, two denoising algorithms are proposed for block-wise inference of true genomic contents from noisy reads, denoted by Maximum-Likelihood (ML) and spectral denoising. In this regard, the following upper bound on the error probability of reliable genome assembly via ML denoising is derived for the case of $M=2$ individuals:
\begin{align}
\pbb{\mathcal{E}_N}\leq&
\inf_{D,d}~
\frac{G}{d}\left\{
e^{-p\left(1-\eta\right)\left(D-d\right)}
+
pMD
e^{-\frac{\lambda}{2}M\left(L-D\right)\left(1-2\sqrt{\epsilon\left(1-\epsilon\right)}\right)}
\right\}
\nonumber \\[2mm]
&~\text{subject to}~\quad0<d<D\leq L,
\label{eq:SORnoisy}
\end{align}
where $\mathcal{E}_N$ denotes the error event in assembly of all genomes in the noisy case. Exact non-asymptotic upper bound in the noisy case, in addition to an approximate analytical solution of the minimization problem in \eqref{eq:SORnoisy} and also the generalization of \eqref{eq:SORnoisy} for any $M\ge2$ can be found in Section \ref{sec:noisy}. Moreover, a similar upper bound and an approximate mathematical analysis is presented for the case of spectral denoising. Our proposed spectral method has a similar performance to that of ML method when $\epsilon$ is small, however, it is much more computationally efficient.

\begin{figure}[t]
\centering
	\includegraphics[trim=0.7cm 0.2cm 1cm 0.7cm,clip,width=0.7\textwidth]{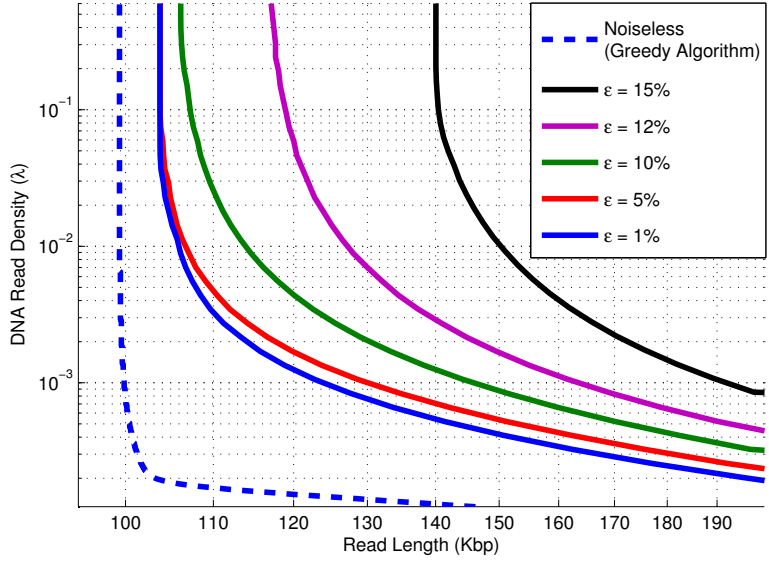}
	\caption{Upper bounds of assembly region in the noisy scenario, when spectral denoising algorithm is employed. Borders in $\lambda$-$L$ plain correspond to assembly error rate of $\pbb{\mathcal{E}_N}\leq 0.001$. Upper bounds of assembly region are depicted for sequencing error rates of  $1\%$, $5\%$, $10\%$, $12\%$ and $15\%$, respectively. Also, upper bounds corresponding to the same assembly error rate for the noiseless case (greedy algorithm) is demonstrated for comparison. As can be verified, sequencing error rates more than $20\%$ cannot be handled by spectral denoising when DNA read lengths are below $200$kbp.}
	\label{fig:SDAssemblyRegion}
\end{figure}

\begin{figure}[t]
\centering
	\includegraphics[trim=0.5cm 0cm 1cm 0.5cm,clip,width=0.7\textwidth]{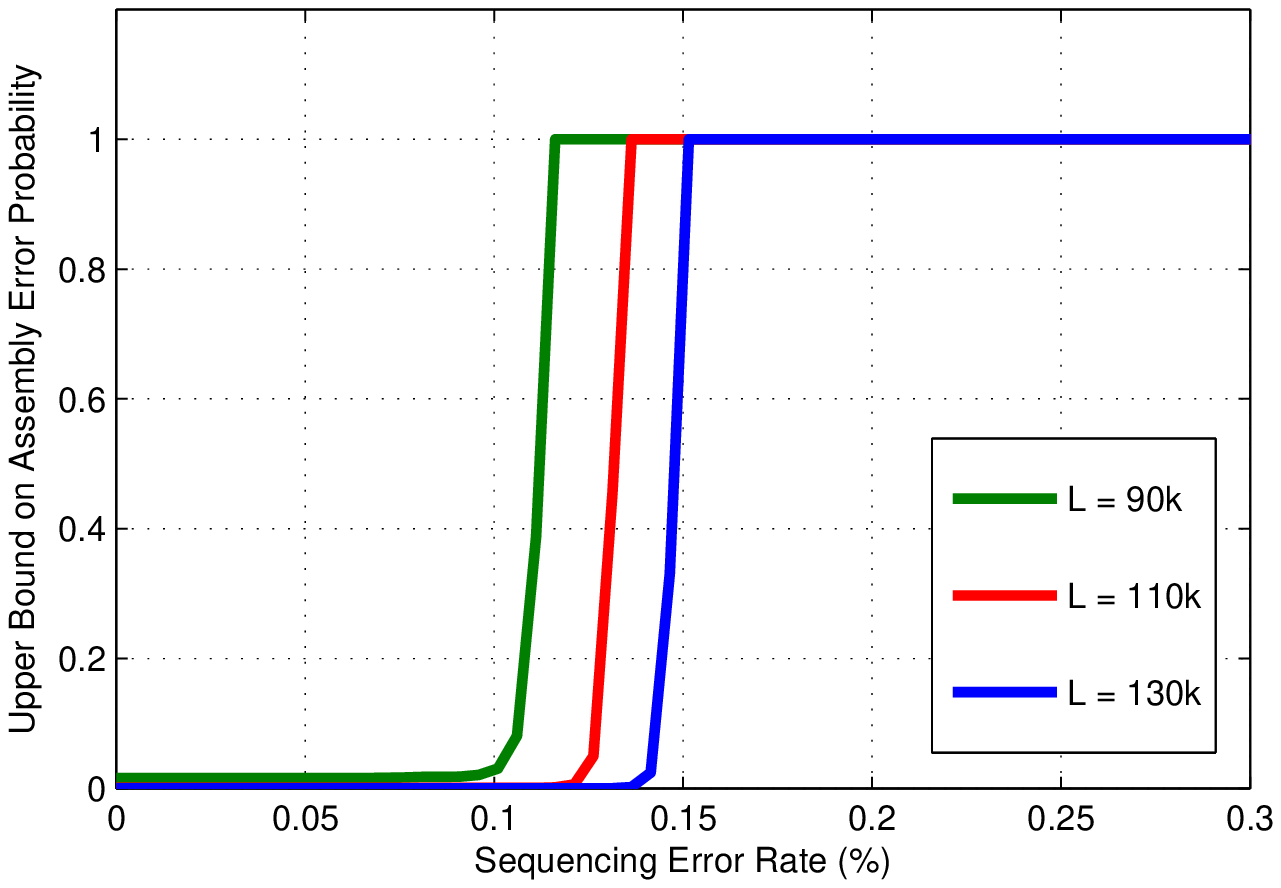}
	\caption{Upper bound on assembly error rate as a function of sequencing error probability $\epsilon$ for three DNA read lengths $L$ and fixed DNA read density of $\lambda=0.01$. Spectral denoising algorithm is employed as the denoising method. As can be verified, drastic phase transitions can be observed as one increases the sequencing error. Large values of DNA read length correspond to occurrences of phase transitions at larger values of $\epsilon$.}
	\label{fig:SDNoisePerf}
\end{figure}

In Fig. \ref{fig:MLAsemblyRegion}, upper bounds on the assembly region in $L$-$\lambda$ plain are shown for the assembly error rate of $\pbb{\mathcal{E}_N}\le 0.001$, when ML denoising is employed. Results are shown for six different sequencing error rates. Other configurations such as $\eta$, $p$ and $G$ are identical to those used for Fig. \ref{fig:assemblyRegion}  which resemble the human genetic settings. As can be seen, regardless of the sequencing error rate, by increasing $\lambda$ the upper bounds converge to a specified DNA read length which is close to that of the noiseless case (sufficient condition for the greedy algorithm).

Fig. \ref{fig:MLNoisePerf} demonstrates the upper bounds on assembly error probability as a function of sequencing error rate for three different DNA read densities. Again, ML technique is  employed for denoising of reads. DNA read length is fixed to $L=110$kbp for all values of $\lambda$. It can be seen that upper bounds undergo a drastic phase transition as sequencing error is increased. However, larger values of $\lambda$ correspond to shifting the phase transition into larger sequencing error rates. It should be noted that for both Fig. \ref{fig:MLAsemblyRegion} and Fig. \ref{fig:MLNoisePerf} the number of individuals is assumed to be $M=2$.

Fig. \ref{fig:SDAssemblyRegion} shows the upper bounds of assembly region in $L$-$\lambda$ plain for different sequencing error rates, when spectral denoising is employed. Upper bounds indicate maximum assembly error rate of $0.001$. Other parameters such as $p$, $G$ and $\eta$ are assumed to be the same as those presumed in Fig. \ref{fig:assemblyRegion} for the human genetic settings. Upper bounds corresponding to $0\leq\epsilon\leq 5\%$ are very close to each other and also to the upper bound associated with the noiseless regime (greedy algorithm). As the sequencing error is increased, assembly region is shifted toward larger values of $\lambda$ and $L$. Moreover, when $\epsilon>20\%$ the upper bound for the maximum assembly error rate of $0.001$ cannot be satisfied by $L<200$kbp.

Fig. \ref{fig:SDNoisePerf} demonstrates the upper bounds on assembly error probability via spectral denoising as a function of sequencing error rate for three different DNA read lengths. DNA read density is assumed to be fixed and equal to $\lambda=0.01$ in all graphs. As can be seen, similar to ML denoising, drastic phase transitions can be observed as one increases the sequencing error which can be shifted toward larger values of $\epsilon$ provided that $L$ is chosen sufficiently large.

\section{Assembly Analysis: Noiseless Reads}
\label{sec:noiseless}

In this section, pooled-DNA sequencing of $M$ individuals for the simple case of tagless and noiseless DNA reads is studied. For any assembling strategy, one can define the error event as the event of failure in reconstructing the genomes of all individuals {\it {uniquely}} and {\it {correctly}}. This error event is denoted by $\mathcal{E}$. We aim to analytically compute lower and upper bounds for $\mathbb{P}\left\{\mathcal{E}\right\}$ based on the statistical models given in Section \ref{sec:model}. To this end, we first obtain necessary and sufficient conditions of unique and correct assembly of all $M$ individuals. The conditions, then, are used to derive tight bounds on $\mathbb{P}\left\{\mathcal{E}\right\}$. 

\begin{figure}[t]
\centering
	\includegraphics[trim=0cm 5cm 0cm 3cm,clip,width=\textwidth]{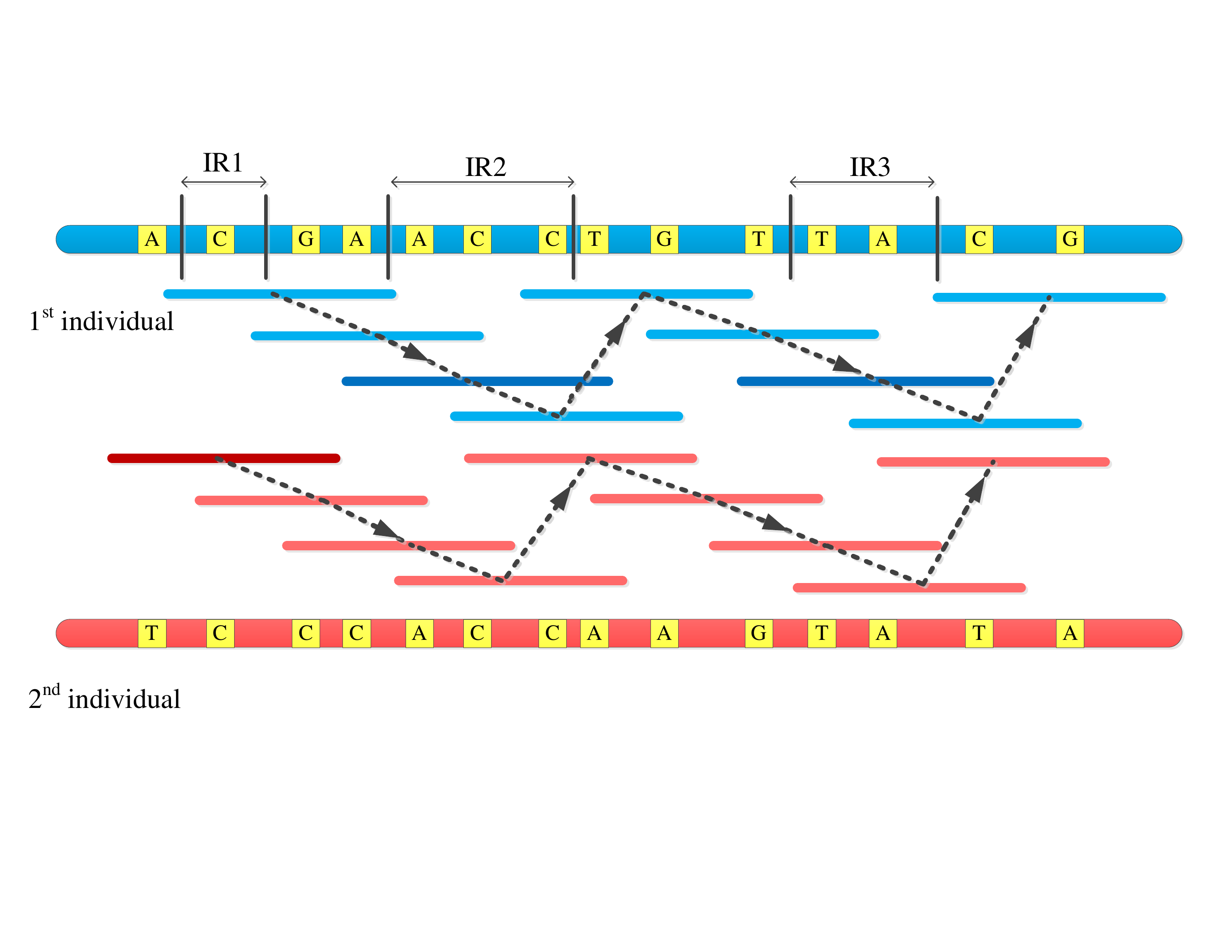}
	\caption{A sample pooled-DNA sequencing scenario with two individuals. Differences between genomes are solely caused by SNPs. As can be easily verified, all SNPs of both individuals are covered. Moreover, identical regions IR$1$, $2$ and $3$ in addition to all inter-SNP segments are bridged. Consequently, one can uniquely and correctly assemble the two genomes.}
\label{fig:bridge}
\end{figure}

In Theorem \ref{thm:MainTheorem}, we show that two conditions are enough for necessity and sufficiency of correct assembly, namely {\it{coverage}} and {\it{bridging}} conditions. The coverage condition is met whenever every single SNP in each individual is covered by at least one DNA read from that individual. The bridging condition is met whenever every single {\it {identical region}} between each pair of individuals is bridged by at least one DNA read from either of the individuals. An {\it{identical region}} between two particular individuals refers to any segment of genome in which they possess completely identical genomic content. If a DNA read starts before an identical region and ends after it, that region is said to be bridged. Fig. \ref{fig:bridge} demonstrates the concept of bridging and coverage in a pooled-DNA sequencing of two individuals. Blue and red DNA reads are associated with the first and second individuals, respectively. Bold reads have bridged the three long identical regions in addition to the identical regions corresponding to inter-SNP segments. The following theorem mathematically formulates the necessary and sufficient conditions for unique and correct genome assembly:

\begin{thm2}[Necessary and Sufficient Conditions for Unique Genome Assembly]
\label{thm:MainTheorem}
In a pooled-DNA sequencing scenario with tagless and noiseless DNA reads, the following conditions are necessary and sufficient for unique and correct genome assembly:
\begin{itemize}
\item SNP Coverage Condition (SC): every SNP must be covered by at least one DNA read from each individual,
\vspace*{-1mm}
\item Bridging Condition (B):  every identical region between any two individuals must be bridged by at least one DNA read from either of the individuals.
\end{itemize}
\end{thm2}

\begin{proof}[Proof of Necessity]
We will  prove that if any of the conditions mentioned in Theorem \ref{thm:MainTheorem} does not hold, then unique and correct assembly becomes impossible. First, assume at least one SNP is not covered by any DNA read for at least one individual. Then, due to lack of information regarding the value of that allele for at least one individual and the final assembly is not unique. Second, assume an identical region between two particular individuals is not bridged. Even if it is possible to correctly assemble both sides of the unbridged identical region, absence of any bridging read inhibits the flow of  membership information from one side of the region to the other. As a result, at least two completely distinct and yet legitimate genomes can be assembled which contradicts the required uniqueness of assembly.
\end{proof}

\begin{proof}[Proof of Sufficiency: The Greedy Algorithm]
It is desirable to design an algorithm that assembles the genomes efficiently and correctly. We will prove that a simple greedy algorithm can assemble the genomes, if the conditions provided in Theorem \ref{thm:MainTheorem} are met. To this end, we first sort the list of reads according to their starting positions. This ordering is possible because it is assumed that reads are mapped correctly and uniquely to the reference genome. Let us denote the sorted read set by $\mathcal{R} = \left\{r_1, r_2, \cdots, r_{N_{\text {tot}}} \right\}$. The proposed algorithm is detailed in Algorithm \ref{alg:greedy}. 

\medskip
\begin{algorithm}
\caption{The greedy algorithm for assembly.}\label{alg:greedy}
\vspace{.1cm}

\textbf{Inputs}
\begin{algorithmic}
\State{$N_{\text {tot}}$ aligned fragments of length $L$ from all individual genomes of length $G$.}
\State{Set of $\mathcal{R}=\{r_1,\cdots,r_{N_{\text {tot}}}\}$ for sorted set of all reads by their start points.}
\end{algorithmic}

\textbf{Output}
\begin{algorithmic}
\State{Assembled genomes of $M$ individuals: $\{C_{1},\cdots, C_{M}\}$.}
\end{algorithmic}

\vspace{.1cm}
\hrulefill

\textbf{Initialization}

\begin{algorithmic}[1]
\For {$ i = 1$ to $M$ }
\State {$C_i = \emptyset$}
\EndFor
\end{algorithmic}
\vspace{.1cm}
\hrulefill

\textbf{Greedy Merging}

\begin{algorithmic}[1]
\For {$ i = 1$ to $N_{\text {tot}}$ }
\State {Find the set $\mathcal{I} \subset \{C_{1},\cdots, C_{M}\}$ such that $r_i$ can be consistently merged to each member of $\mathcal{I}$.}
\State{Merge $r_i$ with $C_j\in \mathcal{I} $ if they share the maximum number of SNPs. If there are more than one candidate, choose one at random.}
\EndFor
\end{algorithmic}
\vspace{.1cm}
\end{algorithm}
\medskip

We next show that the greedy algorithm assembles all the genomes correctly given the conditions in Theorem \ref{thm:MainTheorem} are met. The proof is by induction. Assume that the algorithm is correct before merging read $r_i$, i.e., the partially constructed genomes $\{C_{1},\cdots, C_{M}\}$, referred to  as contigs, correspond to the true genomes and all reads prior to $r_i$ are merged correctly to their corresponding contigs. Without loss of generality, we assume $r_i$ is a read from the first genome. The algorithm fails if there exists $C_j \in \mathcal{I}$ with $j\neq 1$ with overlap size greater than or equal to that of $C_1$. We will show that this event does not happen if the conditions in Theorem \ref{thm:MainTheorem} are met. Let $b_j$ denote the overlap size between $C_j$ and $r_i$ for all $C_j\in\mathcal{I}$. We need to show that $b_j< b_1$ for all $j\neq 1$. Assume $b_j \geq b_1$. This case corresponds to the existence of an identical region between first and $j$'th individuals. From the third condition of the theorem, this identical region is bridged by at least one read from either individuals. The bridging read starts earlier than $r_i$ and based on our assumption is correctly merged to either $C_1$ or $C_j$. If the read comes from the first individual, then the overlap size between $C_j$ and $r_i$ is always less that that of $C_1$ and $r_i$ which contradicts the assumption. On the other hand, if the read comes from the $j$'th individual, $C_j$ cannot be in $\mathcal{I}$ as it is not consistent with $r_i$. This completes the proof.
\end{proof}

An implication of Theorem \ref{thm:MainTheorem} is that the error event $\mathcal{E}$ is the union of two events, denoted by $\mathcal{E}_{SC}$ and $\mathcal{E}_{B}$, which are due to SNP Coverage and Bridging conditions, respectively. Consequently, one can obtain lower and upper bounds on $\pbb{\mathcal{E}}$ as
\begin{equation}
\max\left\{
\mathbb{P}\left\{\mathcal{E}_{SC}\right\},
\mathbb{P}\left\{\mathcal{E}_B\right\}
\right\}
\leq
\mathbb{P}\left\{\mathcal{E}\right\}
\leq
\mathbb{P}\left\{\mathcal{E}_{SC}\right\}+
\mathbb{P}\left\{\mathcal{E}_B\right\}.
\label{eq:totErrorBound}
\end{equation}

In the following subsections, we first analyze the probability of $\mathcal{E}_{SC}$ and $\mathcal{E}_{B}$ in terms of $\lambda$ and $L$. In particular, we obtain the exponents of decay of each of these probabilities when one increases $\lambda$ and $L$. These exponents are then used to  obtain asymptotic bounds on $\mathbb{P}\left\{\mathcal{E}\right\}$.

\subsection{SNP Coverage Condition}

The first condition in Theorem \ref{thm:MainTheorem} states that each SNP should be contained in at least one read from each of $M$ individuals. Equivalently, for each individual, the starting point of at least one read should fall within the distance $L$ of each SNP. As DNA fragments are randomly spread along genome with the density of $\lambda=N/G$, the distance of two consecutive fragments has an exponential distribution of the form $\lambda e^{-\lambda\ell}$. Let $\mathcal{E}^{\left(j\right)}_{SC}$ denote the event that there exists at least on SNP in the $j$th individual which is not covered by any read. For any $j$, the probability of occurring $\mathcal{E}^{\left(j\right)}_{SC}$ conditioned on the number of reads can be exactly written as:
\begin{equation}
\mathbb{P}\left\{\mathcal{E}_{SC}^{\left(j\right)}\vert N\right\} =
1-\left(1- \frac{p}{p+\lambda} e^{-\lambda L}\right)^{N},
\end{equation}
where $N$ has a {\it {Poisson}} distribution with average $\lambda G$. Therefore,
\begin{align}
\mathbb{P}\left\{\mathcal{E}_{SC}^{\left(j\right)}\right\}=
\mathbb{E}\left\{\mathbb{P}\left\{\mathcal{E}^{\left(j\right)}_{SC}\vert N\right\}\right\}
&=
e^{-\lambda G}\sum_{N=0}^{\infty}\frac{\left(\lambda G\right)^N}{N!}
\left(
1-\left(1- \frac{p}{p+\lambda} e^{-\lambda L}\right)^{N}
\right)
\nonumber \\
&=
1-\exp\left(
\frac{-Ge^{-\lambda L}}{\frac{1}{p}+\frac{1}{\lambda}}
\right).
\end{align}
This implies that if one wishes the SNP Coverage condition to hold for a single individual with probability $1-\epsilon$ (for $\epsilon\ll 1$ and $\lambda G\gg1$), then read length $L$ should be chosen close to:
$$ L\simeq\frac{1}{\lambda} \log \left(\frac{G}{\left(\frac{1}{p}+\frac{1}{\lambda}\right)\epsilon}  \right).$$

Instead of obtaining exact formulation for the case of $M$ individuals, we derive tight upper and lower bounds on $\pbb{\mathcal{E}_{SC}}$. The upper bound can be simply attained via the union bound as follows:
\begin{equation}
\pbb{\mathcal{E}_{SC}}\leq
M\left(
1-\exp\left(
\frac{-Ge^{-\lambda L}}{\frac{1}{p}+\frac{1}{\lambda}}
\right)
\right).
\end{equation}

For the lower bound, we have derived two asymptotically tight formulations, where one formulation is appropriate for large $M$ and the other is tight for large $L$. When the number of individuals, $M$, is not very large, one can bound $\pbb{\mathcal{E}_{SC}}$ from below simply by considering the fact that $\pbb{\mathcal{E}_{SC}}\ge\pbb{\mathcal{E}_{SC}^{\left(j\right)}}$. Therefore, we have:
\begin{equation}
\pbb{\mathcal{E}_{SC}}\ge
1-\exp\left(
\frac{-Ge^{-\lambda L}}{\frac{1}{p}+\frac{1}{\lambda}}
\right).
\label{eq:exactLower1}
\end{equation}
This formulation is not tight when the number of individuals is very large. Therefore, we offer another lower bound  which linearly grows with $M$. Let us divide the whole genome into $G/\left(L+x\right)$ non-overlapping segments of length $L+x$. If one individual does not have a read starting within the segment, then there exists an interval with length $x$ which is not covered by all individuals. Moreover, if there exists one SNP which arrives within this interval, then the coverage condition does not hold. In this way, $\pbb{\mathcal{E}_{SC}}$ can be lower bounded as:
\begin{align}
\pbb{\mathcal{E}_{SC}}&\ge
\sup_{x}~
\sum_{k=0}^{\frac{G}{L+x}}\binom{\frac{G}{L+x}}{k}P_x^k\left(1-P_x\right)^{n-k}
\left(1-e^{-pkx}\right)=
\nonumber \\
&=1-
\inf_{x}
\left(1-P_x\left(1-e^{-px}\right)\right)^{\frac{G}{L+x}},
\label{eq:SCLowerBound}
\end{align}
where $k$ denotes the total number of intervals not covered by all individuals. In addition, $P_x$ represents the probability of arriving DNA reads belonging to less than $M$ individuals in a single segment of length $L+x$, and can be formulated as:
\begin{equation}
P_x\triangleq 1-\left(1-e^{-\lambda\left(L+x\right)}\right)^M.
\end{equation}
It can be shown that the optimal value of $x$ which maximizes the lower bound does not have a closed form analytic formulation. However, asymptotic analysis of \eqref{eq:SCLowerBound} for the case of $L\gg 1/\lambda$ results in a simpler mathematical formulation whose maximal point is analytically tractable and provides a good approximation for the optimal value of $x$, denoted by $x_{\text {opt}}$. For large values of $L$, the lower bound can be simplified as:
\begin{equation}
\pbb{\mathcal{E}_{SC}}\ge
GMe^{-\lambda L}
\sup_x
\frac{e^{-\lambda x}\left(1-e^{-px}\right)}{L+x},
\label{eq:lowerBoundAsymptot}
\end{equation}
and it is easy to show that the maximizer of the above inequality can be closely approximated by
\begin{equation}
x_{\text {opt}}\simeq \frac{1}{p}\log\left(1+\frac{p}{\lambda}\right).
\nonumber
\end{equation}
Substitution of this approximate maximizer in \eqref{eq:lowerBoundAsymptot} results in the following lower and upper bounds for $\pbb{\mathcal{E}_{SC}}$ in the asymptotic regime, where $\lambda L\gg 1$:
\begin{equation}
\frac{GM}{\frac{1}{p}+\frac{1}{\lambda}}e^{-\lambda L}~
\max\left\{
\frac{1}{M},
\frac{
\left(1+\frac{p}{\lambda}\right)^{\frac{-\lambda}{p}}
}
{\lambda L + 
\frac{\lambda}{p}\log\left(
1+\frac{p}{\lambda}
\right)
}
\right\}
\leq
\pbb{\mathcal{E}_{SC}}
\leq
\frac{GM}{\frac{1}{p}+\frac{1}{\lambda}}e^{-\lambda L}
\end{equation}

The results for the non-asymptotic case can be easily computed by substituting $x_{\text {opt}}$ into \eqref{eq:SCLowerBound}, since any value of $x$ implies a lower bound. From the asymptotic analysis it is evident that both lower and upper bounds have the same exponent of decay with respect to $L$. Moreover, the duality in the formulation of lower bound yields that for small values of $M$, the error probability of SNP coverage in a single individual is an acceptable approximation for $\pbb{\mathcal{E}_{SC}}$. On the other hand, when $M$ is large, the alternate lower bound is more favorable. More precisely, when $\lambda\gg p$, the transition between the two lower bounds occur when $M\simeq \lambda Le$.
\subsection{Bridging Condition}

In this subsection, we first obtain lower and upper bounds on $\mathbb{P}\left\{\mathcal{E}_B\right\}$ for the case of two individuals ($M=2$). Interestingly, it is shown that exponents of decay of both lower and upper bounds are the same with respect to read length $L$ and read density $\lambda$. We also provide an exact analysis of the error probability in Appendix \ref{app:exactBridging} which can be used for numerical analyses. Generalization to more than two individuals is carried out at the end of this subsection. 

\subsubsection{Case of Two Individuals}
Let us denote a SNP which has different alleles in the two individuals as a discriminating SNP. Identical regions between the two genomes correspond to segments between any two consecutive discriminating SNPs. The bridging condition for the two individual case implies that all of these identical regions must be bridged by at least one read from one of the two individuals. In the following, we first show that under the assumptions of Section \ref{sec:model} the arrival of discriminating SNPs corresponds to a {\it {Poisson}} process. We subsequently derive lower and upper bounds on $\pbb{\mathcal{E}_B\vert M=2}$ based on this property.

Per definition, the difference between genomes of two particular individuals is solely due to discriminating SNPs. The $s$th SNP, where $s\in \{1,2,\ldots,S\}$, has identical alleles in the two individuals with a probability of:
\begin{equation}
\eta_s=\sum_{b\in\left\{A,G,C,T\right\}}q^2_{s,b}.
\end{equation}
It has been further assumed that each allele frequency vector $\boldsymbol{q}_s=\left(q_{s,A},q_{s,C},q_{s,G},q_{s,T}\right)$ is sampled independently from a fixed and known multi-dimensional distribution $\pbb{\boldsymbol{q}}$. For a given segment of length $\ell$, if $n$ is the total number of SNPs falling within this segment, then it has a {\it {Poisson}} distribution of the form:
\begin{equation}
\pbb{n}=\frac{\left(p\ell\right)^n}{n!}e^{-p\ell}~,~n=0,1,\ldots.
\nonumber
\end{equation}
Similarly, if $n_d$, $0\leq n_d\leq n$, is the total number of discriminating SNPs in this segment, then its distribution can be given by:
\begin{equation}
\pbb{n_d} = \mathbb{E}_s\left\{
\sum_{n=n_d}^{\infty}
\frac{\left(p\ell\right)^n}{n!}e^{-p\ell}
\left(\sum_{\mathcal{R}_{n_d}}
\prod_{s\in\mathcal{R}_{n_d}}\eta_s
\prod_{s\notin\mathcal{R}_{n_d}}\left(1-\eta_s\right)
\right)
\right\},
\end{equation}
where $\mathcal{R}_{n_d}$ represents the set of all possible subsets of $\left\{1,2,\ldots,n\right\}$ with cardinality $n_d$. Since the sequence of $\eta_s,s=1,2,\ldots,S$ is i.i.d. and $\left\vert\mathcal{R}_{n_d}\right\vert=\binom{n}{n_d}$, we will have:
\begin{equation}
\pbb{n_d} =\sum_{n=n_d}^{\infty}
\frac{\left(p\ell\right)^n}{n!}e^{-p\ell}
\binom{n}{n_d}
\eta^{n_d}
\left(1-\eta\right)^{n-n_d}=
\frac{\left(p\left(1-\eta\right)\ell\right)^{n_d}}{n_d!}e^{-p\left(1-\eta\right)\ell},
\end{equation}
where $\eta=\mathbb{E}_s\left\{\eta_s\right\}$. Thus, the set of discriminating SNPs can also be modeled by a {\it {Poisson}} process with rate $p\left(1-\eta\right)$.

We first focus on the error probability of bridging a single identical region. This event is denoted $\mathcal{E}^{\left(1\right)}_B$.  One can show that:
\begin{align}
\mathcal{P}_2
\triangleq
\mathbb{P}\left\{\mathcal{E}^{\left(1\right)}_B \vert M=2\right\}
&=
e^{-p\left(1-\eta\right)L}+
\int_{0}^{L}p\left(1-\eta\right)e^{-p\left(1-\eta\right)\ell}
e^{-2\lambda\left(L-\ell\right)}{\text d}\ell
=
\nonumber \\
&=\left\{\begin{array}{cc}
\frac{2\lambda e^{-p\left(1-\eta\right)L} - p\left(1-\eta\right)e^{-2\lambda L}}{2\lambda-p\left(1-\eta\right)}
&,~  p\left(\frac{1-\eta}{2}\right)\neq \lambda
\\[4mm]
\left(1+Lp\left(1-\eta\right)\right)e^{-p\left(1-\eta\right)L}
&,~   p\left(\frac{1-\eta}{2}\right)= \lambda.
\end{array}
\right.
\end{align}

For sufficiently large values of $L$, the parameter $\mathcal{P}_2$ appears in our lower and upper bounds explicitly as:
\begin{equation}
\frac{G\mathcal{P}_2}{L}
\leq
\mathbb{P}\left\{\mathcal{E}_B \vert M=2\right\}
\leq
Gp\left(1-\eta\right)\mathcal{P}_2.
\end{equation}
We will prove these inequalities in the sequel, where we rigorously derive lower and upper bounds for all ranges of parameters.  

\paragraph{Upper Bound}
To obtain an upper bound, we simply employ the  union bound, i.e. $\mathbb{P}\left\{\mathcal{E}_B\vert S_{\eta},M=2\right\}\leq S_{\eta}\mathbb{P}\left\{\mathcal{E}^{\left(1\right)}_B\vert M=2\right\}$ with $S_{\eta}$ indicating the number of discriminating SNPs. The obtained upper bound can be formulated as follows:
\begin{equation}
\mathbb{P}\left\{\mathcal{E}_B \vert M=2\right\}
=
\mathbb{E}_{S_{\eta}}\left\{\mathbb{P}\left\{\mathcal{E}_B \vert S_{\eta},M=2\right\}\right\}
\leq
Gp\left(1-\eta\right)\mathcal{P}_2.
\end{equation}


\paragraph{Lower Bound}
To obtain a lower bound on $\pbb{\mathcal{E}_B\vert M=2}$, we obtain an upper bound on $\pbb{\mathcal{\bar E}_B\vert M=2}$. Let $\mathcal{C}_1$ represent the event that there are less than two discriminating SNPs in the whole genome. Clearly, bridging condition is satisfied under $\mathcal{C}_1$. Moreover, let $\mathcal{C}_2$ represent the event that for any fragment of length $L$ between the first and the last discriminating SNPs, there exists one read starting within the fragment and covering at least one discriminating SNP.  If $\mathcal{C}_2$  does not happen, then bridging condition fails as the flow of information does not go through discriminating SNPs and the assembly stops at the fragment violating the condition. Therefore, 
\begin{align}
\pbb{\mathcal{\bar E}_B\vert M=2} &  \leq \pbb{\mathcal{C}_1 \cup \mathcal{C}_2} \nonumber \\
& = \pbb{\mathcal{C}_1} +\pbb{\mathcal{C}_2 | \mathcal{\bar C}_1} \left(1- \pbb{\mathcal{C}_1}\right).
\label{eq:twoIndLowerBound1}
\end{align}  
It can be easily verified that 
\begin{equation}
\pbb{\mathcal{C}_1} =\left(1+ Gp\left(1-\eta\right) \right) e^{-Gp\left(1-\eta\right)}. 
\label{eq:twoIndLowerBound2}
\end{equation}

To obtain an upper bound on $\pbb{\mathcal{C}_2 | \mathcal{\bar C}_1}$, we note that if the distance between the first and the last discriminating SNPs, denoted by $L_R$, is partitioned into $L_R/L$ non-overlapping segments of length $L$, then it is necessary that for all segments, at least one read starts and then covers at least one discriminating SNP. Clearly, focusing on such non-overlapping segments provide an upper bound on $\pbb{\mathcal{C}_2 | \mathcal{\bar C}_1}$. Moreover, since the segments are non-overlapping, the corresponding events are independent. Let $\mathcal{C}_2^j$ denote the event corresponding to the $j$th segment. Then,
\begin{align}
\pbb{\mathcal{C}_2 | \mathcal{\bar C}_1} & = \mathbb{E}_{L_R} \left\{ \pbb{\mathcal{C}_2 | \mathcal{\bar C}_1, L_R}\right\} \nonumber \\
& \leq \mathbb{E}_{L_R}  \left\{ \pbb{\bigcap_{j=1}^{L_R/L}\mathcal{C}_2^j \big | \mathcal{\bar C}_1 , L_R} \right\}\nonumber\\
& =  \mathbb{E}_{L_R}  \left\{ \left( \pbb{\mathcal{C}_2^1 | \mathcal{\bar C}_1 , L_R} \right)^{\frac{L_R}{L}} \right\}. \label{eq:lower_bound_1}
\end{align}

We further obtain an upper bound on $\pbb{\mathcal{C}_2^1 | \mathcal{\bar C}_1 , L_R}$ using the following arguments. If there are $n$ total arrivals within this interval and $k$ of them belong to discriminating SNPs and the rest belong to DNA reads from one of the two individuals, then the event does not happen if all SNPs arrive  before all  DNA reads. First, note that the probability of observing $k$ discriminating SNPs and $n-k$ reads within the segment can be obtained as
\begin{equation}
\frac{
\left(p\left(1-\eta\right)\right)^k\left(2\lambda\right)^{n-k}
}{k!\left(n-k\right)!}
e^{-\left(2\lambda+p\left(1-\eta\right)\right)L}.
\nonumber
\end{equation}
Averaging over all $n$ and $k$ yields
\begin{align}
\pbb{\mathcal{C}_2^1 | \mathcal{\bar C}_1 , L_R} 
& = 1- 
\sum_{n=0}^{\infty}\sum_{k=0}^{n}
\frac{1}{\binom{n}{k}}
\frac{
\left(p\left(1-\eta\right)\right)^k\left(2\lambda\right)^{n-k}
}{k!\left(n-k\right)!}
e^{-\left(2\lambda+p\left(1-\eta\right)\right)L}
\end{align}
A simple calculation reveals that
\begin{align}
\pbb{\mathcal{C}_2^1 | \mathcal{\bar C}_1 , L_R} 
& = 1- \frac{2\lambda e^{-p\left(1-\eta\right)L} - p\left(1-\eta\right)e^{-2\lambda L}}{2\lambda-p\left(1-\eta\right)} \nonumber \\
& = 1- \mathcal{P}_2.
\end{align}
Substituting into \eqref{eq:lower_bound_1} yields
\begin{align}
\pbb{\mathcal{C}_2 | \mathcal{\bar C}_1} 
& \leq  \mathbb{E}_{L_R}  \left\{ \left( 1- \mathcal{P}_2\right)^{\frac{L_R}{L}} \right\}. 
\end{align}
The distribution of $L_R$ is needed to compute the expectation. It is easy to show that
\begin{equation}
L_R~\sim~
\frac{
p^2\left(1-\eta\right)^2
\left(G-L_R\right)e^{-p\left(1-\eta\right)\left(G-L_R\right)}
}{
1-\left(1+Gp\left(1-\eta\right)\right)e^{-Gp\left(1-\eta\right)}
}
\quad,\quad 0\leq L_R\leq G.
\end{equation}
The final formulation for upper bound can be written as follows:
\begin{align}
\pbb{\mathcal{E}_B\vert M=2}
&\ge
1-\pbb{\mathcal{C}_1}-\pbb{\mathcal{C}_2 | \mathcal{\bar C}_1} \left(1- \pbb{\mathcal{C}_1}\right)
\nonumber \\
&\ge
1-\left(1+Gp\left(1-\eta\right)\right)e^{-Gp\left(1-\eta\right)}-
\nonumber \\
&~~\frac{
\left(1-\mathcal{P}_2\right)^{\frac{G}{L}}
-\left(1+\frac{G}{L}\left(Lp\left(1-\eta\right)+\log\left(1-\mathcal{P}_2\right)\right)\right)
e^{-Gp\left(1-\eta\right)}
}{
\left(1+\frac{\log\left(1-\mathcal{P}_2\right)}{Lp\left(1-\eta\right)}\right)^2
}
\nonumber \\
&\triangleq \Lambda\left(\mathcal{P}_2\right).
\label{eq:psiDef}
\end{align}

In the asymptotic regime where $G\gg L$ and $\min\left\{p\left(1-\eta\right),2\lambda\right\}L\gg1$ (which is the case in all practical situations), such lower bound can be approximated by the following formulation:
\begin{equation}
\mathbb{P}\left\{\mathcal{E}_B\vert M=2\right\}
\ge	~
1-\exp\left(-\frac{G\mathcal{P}_2}{L}\right)
\simeq
\frac{G\mathcal{P}_2}{L}.
\end{equation}
Clearly, both lower and upper bounds for bridging error probability have the same exponent of decay with respect to $L$. This property indicates a sharp phase transition for $\pbb{\mathcal{E}_B\vert M=2}$ as $L$ is increased.

\subsubsection{Case of $M$ Individuals}

The upper and lower bounds obtained in the two individual case can be readily generalized to $M>2$ individuals with a tight asymptotic performance. This result is presented in the following theorem.
\begin{thm2}
\label{thm:MTheorem}
The probability of error event in bridging condition for $M$ individuals, $\pbb{\mathcal{E}_B}$, can be bounded both from above and below as:
\begin{equation}
\Lambda\left(\Delta\right)
\leq
\mathbb{P}\left\{\mathcal{E}_B\right\}
\leq
\binom{M}{2}Gp\left(1-\eta\right)\mathcal{P}_2,
\nonumber
\end{equation}
where $\Delta$ is defined as follows:
\begin{equation}
\Delta\triangleq
\sum_{m=2}^{M}\left(-1\right)^m\binom{M}{m}\mathcal{P}_m~,~
\mathcal{P}_m \triangleq \left\{
\begin{array}{cc}
\frac{m\lambda e^{-p\left(1-\eta\right)L} - p\left(1-\eta\right)e^{-m\lambda L}}{m\lambda-p\left(1-\eta\right)}
&,~  \lambda\neq p\left(\frac{1-\eta}{m}\right)
\\[3mm]
\left(1+p\left(1-\eta\right)L\right)e^{-p\left(1-\eta\right)L} &,~ \lambda=p\left(\frac{1-\eta}{m}\right)
\end{array}
\right.
\label{eq:DeltaDef}
\end{equation}
and $\Lambda\left(\cdot\right)$ is defined as in \eqref{eq:psiDef}.
\end{thm2}

\begin{proof}
The upper bound has been directly derived from union bound in which the factor of $\binom{M}{2}$ corresponds to  all pairwise comparisons between every two individuals. Each of such comparisons has an error probability of at most $\mathcal{P}_2$ which justifies one of the inequalities.

For the lower bound, we follow similar arguments to those in the case of two individuals. Let $L_R$ denote the distance between the first and last SNPs in genome. We divide this region into $L_R/L$ non-overlapping segments of length $L$. In order to have a unique assembly, it is necessary that at least $M-1$ DNA reads from $M-1$ distinct individuals arrive in each segment and cover at least one discriminating SNP. If this condition is violated, for at least two individuals, the two sides of the segment are informatically disconnected and therefore  unique assembly is impossible. The probability of error for such event in a single segment can be derived as follows. Assume that there are $k$ discriminating SNPs and $n_m$ DNA reads from the $m$th individual in a particular segment, where $m=1,2,\ldots,M$. Obviously, $k$ and each $n_m$ are random variables with corresponding {\it {Poisson}} distributions. Also, let $w$ denote the number of all possible permutations of SNPs and DNA reads arrivals:
\begin{equation}
w=\frac{\left(n_1+\cdots+n_M+k\right)!}{k!n_1!\cdots n_M!}.
\end{equation}	
An error happens whenever for at least two individuals all DNA reads arrive after the last discriminating SNP. We first consider the probability of this event for two particular individuals, namely the $i$th and the $j$th ones. The number of all possible permutations that correspond to such scenario can be calculated by first ordering the $k$ discriminating SNPs and then ordering $n_i+n_j$ DNA reads associated with the $i$th and $j$th individuals. The rest of DNA reads which belong to other individuals can then be arbitrarily distributed among these arrivals. Consequently, the total number of {\it {bad}} permutations is $w/\binom{n_i+n_j+k}	{k}$, which implies that the probability of occurring an error with respect to only the $i$th and $j$th individuals is ${\binom{n_i+n_j+k}{k}}^{-1}$. One can readily show that when there are $c$ individuals with indices $i_1,i_2,\ldots,i_c$, this probability generalizes to ${\binom{n_{i_1}+\cdots+n_{i_c}+k}{k}}^{-1}$. By using the inclusion-exclusion principle, it can be shown that the probability of occurring an error event in a single segment $\Delta$ can be obtained via the following formulation:
\begin{align}
\Delta
\triangleq ~
e^{-\left(p\left(1-\eta\right)+M\lambda\right)L}
\sum_{k=0}^{\infty}
\sum_{n_1=0}^{\infty}
\ldots
\sum_{n_M=0}^{\infty}&
\frac{\left(p\left(1-\eta\right)L\right)^k}{k!}
\frac{\left(\lambda L\right)^{n_1+\cdots+n_M}}{n_1!\cdots n_M!}
\left(
\sum_{i<j}{\binom{n_i+n_j+k}{k}}^{-1}-
\right. \nonumber \\ & \left.
\cdots
+\left(-1\right)^M
{\binom{n_1+\cdots+n_M+k}{k}}^{-1}
\right),
\end{align}
which can be simplified into the formulation given in \eqref{eq:DeltaDef}. Again, pursuing the same path we have followed in the case of two individuals one can show that:
\begin{equation}
\pbb{\mathcal{E}_B}\ge
\left(
1-\left(1+Gp\left(1-\eta\right)\right)e^{-Gp\left(1-\eta\right)}
\right)\left(1-
\mathbb{E}_{L_R}\left\{\left(1-\Delta\right)^{\frac{L_R}{L}}\right\}
\right)
=\Lambda\left(\Delta\right),
\end{equation}
which completes the proof.
\end{proof}

Next, we address asymptotic analysis of the attained bounds. For the sake of simplicity, let us assume $m\lambda\neq p\left(1-\eta\right)$ for all $m=2,3,\ldots$. The special cases where the equality holds for some $m$ lead to very similar analyses as the following arguments. Based on this assumption, it can be shown that $\mathcal{P}_m,m=2,3,\ldots$ satisfies the following inequalities:
\begin{equation}
e^{-\min\left\{p\left(1-\eta\right),m\lambda\right\}L}
\leq
\mathcal{P}_m
\leq
\frac{e^{-\min\left\{p\left(1-\eta\right),m\lambda\right\}L}}
{1-\frac{\min\left\{p\left(1-\eta\right),m\lambda\right\}}{\max\left\{p\left(1-\eta\right),m\lambda\right\}}}.
\end{equation}
Consequently, when $G\gg L$ and $\min\left\{p\left(1-\eta\right),2\lambda\right\}L\gg1$, bridging error probability $\pbb{\mathcal{E}_B}$ can be simply bounded as:
\begin{align}
\frac{G}{L}
\sum_{m=2}^{M}\frac{\left(-1\right)^m\binom{M}{m}}{1-\frac{p\left(1-\eta\right)}{m\lambda}}
e^{-\min\left\{m\lambda,p\left(1-\eta\right)\right\}L}
&\leq
\mathbb{P}\left\{\mathcal{E}_B\right\}
\nonumber \\
&\leq
\frac{GM^2 p\left(1-\eta\right)}{2\left(1-\frac{\min\left\{2\lambda,p\left(1-\eta\right)\right\}}{\max\left\{2\lambda,p\left(1-\eta\right)\right\}}\right)}
e^{-\min\left\{2\lambda,p\left(1-\eta\right)\right\}L}.
\end{align}
And if $\lambda\gg p\left(1-\eta\right)$, then the lower bound can be simply approximated by 
\begin{equation}
\frac{G\left(M-1\right)}{L}e^{-p\left(1-\eta\right)L}.
\end{equation}

It is evident that the increase in both upper and lower bounds of $\pbb{\mathcal{E}_B}$ are polynomial with respect to $G$ and $M$, while the decrease in error remains exponential with respect to $L$. The exponent of decay for the upper bound is $\min\left\{2\lambda,p\left(1-\eta\right)\right\}$. This result implies that if $\lambda> p\left(\frac{1-\eta}{2}\right)$, then any further increase in sequencing depth does not lead to a significantly better asymptotic behavior.

\subsection{Assembly Error: Asymptotic Analysis}

In this subsection, we combine lower and upper bounds associated with SNP coverage and bridging conditions to obtain asymptotically tight bounds on the assembly error probability $\pbb{\mathcal{E}}$. Based on \eqref{eq:totErrorBound}, when $\min\left\{\lambda,p\left(1-\eta\right)\right\}L\gg1$ and $G\gg L$ a global upper bound on the assembly error probability can be obtained as follows:
\begin{align}
\mathbb{P}\left\{\mathcal{E}\right\}\leq&~
\mathbb{P}\left\{\mathcal{E}_{SC}\right\}+
\mathbb{P}\left\{\mathcal{E}_B\right\}
\nonumber \\
\leq&~
GM\left(
\frac{e^{-\lambda L}}{\frac{1}{p}+\frac{1}{\lambda}}
+
\frac{Mp\left(1-\eta\right)e^{-\min\left\{2\lambda,p\left(1-\eta\right)\right\} L}}
{2\left(1-\frac{\min\left\{2\lambda,p\left(1-\eta\right)\right\}}{\max\left\{2\lambda,p\left(1-\eta\right)\right\}}\right)}
\right).
\end{align}
The lower bound can also be written as:
\begin{align}
\mathbb{P}\left\{\mathcal{E}\right\}\ge&~
\max\left\{
\mathbb{P}\left\{\mathcal{E}_{SC}\right\},
\mathbb{P}\left\{\mathcal{E}_B\right\}
\right\}
\nonumber \\
\ge&~
G\max\left\{
\frac{e^{-\lambda L}}{\frac{1}{p}+\frac{1}{\lambda}}
\max\left\{
1,
\frac{
M\left(1+\frac{p}{\lambda}\right)^{\frac{-\lambda}{p}}
}
{\lambda L + 
\frac{\lambda}{p}\log\left(
1+\frac{p}{\lambda}
\right)
}
\right\}
,
\frac{1}{L}
\sum_{m=2}^{M}\frac{\left(-1\right)^m\binom{M}{m}}{1-\frac{p\left(1-\eta\right)}{m\lambda}}
e^{-\min\left\{m\lambda,p\left(1-\eta\right)\right\}L}
\right\}.
\end{align}
If one aims at increasing the sequencing depth such that $2\lambda\gg p\left(1-\eta\right)$, then the above asymptotic lower and upper bounds can be further simplified resulting in the following bounds:
\begin{equation}
\frac{G\left(M-1\right)}{L}
e^{-p\left(1-\eta\right)L}
\leq
\pbb{\mathcal{E}}
\leq
\frac{1}{2}GM^2p\left(1-\eta\right)e^{-p\left(1-\eta\right)L}.
\end{equation}
It should be reminded that exact non-asymptotic lower and upper bounds can be found in Theorem \ref{thm:MTheorem}, \eqref{eq:exactLower1} and \eqref{eq:SCLowerBound}.

In this section, we have derived asymptotically tight lower and upper bounds on $\pbb{\mathcal{E}}$ when DNA reads are tagless and noiseless. However, all practical sequencing machines produce DNA reads that are contaminated with sequencing noise which alters the sequenced nucleotides by a particular error rate. Next section deals with theoretical bounds on the assembly error rate in noisy scenarios.

\section{Assembly Analysis: Noisy Fragments}
\label{sec:noisy}

In this section, we consider the case of noisy DNA reads where biallelic SNP values are randomly altered with an error probability of $\epsilon$. Moreover, we have assumed binary SNP values, although generalization to more than two alleles is straightforward. Surprisingly, we have shown that for any $\epsilon < 0.5$ and for sufficiently large number of reads, unique and correct genome assembly from noisy reads is possible. Note that $\epsilon = 0.5$ corresponds to the case where reads do not carry any information related to SNPs and assembly is impossible. 

The following theorem provides sufficient conditions for genome assembly in the noisy case. The conditions are so stringent that a simple greedy algorithm such as the one presented in the noiseless case can reconstruct all the genomes correctly and uniquely. 

\begin{thm2}[Sufficient Conditions for Unique Genome Assembly in Noisy Case]
\label{thm:MainNoisyTheorem}
Assume genome is divided into a number of overlapping segments of length $D$, where each two consecutive segments have an overlapping region of length $D-d$ ($d<D$). Also assume the following conditions hold:
\begin{itemize}
\item
Discrimination Condition (Disc): In each overlapping region between two consecutive segments, every two individuals are distinguishable based on their true genomic content,
\item
Denoising Condition (Den): In each segment, one can phase all the individuals based on reads covering the segment.
\end{itemize}
Then, correct and unique genome assembly is possible for all individuals.
\end{thm2}

\begin{proof}
The proof is straightforward. If the above-mentioned conditions hold, then the conditions in Theorem \ref{thm:MainTheorem} also hold and genome assembly becomes possible by the greedy algorithm.
\end{proof}

Let us denote the error event in correct and  unique assembly of all genomes in the noisy regime by $\mathcal{E}_{N}$. From the fact that the genome can be divided into $G/d$ overlapping segments of length $D$, one can write
\begin{align}
\pbb{\mathcal{E}_{N}} &  \leq \frac{G}{d} \pbb{\mathcal{E}^{\left(1\right)}_{N}},
\end{align}
where $\mathcal{E}^{\left(1\right)}_{N}$ is the event that in an interval of length $D$ one of the sufficient conditions does not hold. Let us denote $\mathcal{E}_{Disc}$ and $\mathcal{E}_{Den}$ as the error events corresponding to discrimination and denoising conditions in a single segment, respectively. Then,  from $\mathcal{E}^{\left(1\right)}_{N} =\mathcal{E}_{Disc} \cup \mathcal{E}_{Den} $ we have:
\begin{align}
\pbb{\mathcal{E}_N}
&\stackrel{\text{(a)}}{\leq}
 \frac{G}{d}
\sum_{\kappa =0}^\infty
\pbb{\mathcal{E}_{Disc}\cup \mathcal{E}_{Den} | \kappa }\pbb{\kappa} 
\nonumber \\
&\stackrel{\text{(b)}}{\leq}
 \frac{G}{d}
\sum_{\kappa =0}^\infty
\left(
\pbb{\mathcal{E}_{Disc}| \kappa } + \pbb{ \mathcal{E}_{Den} |\mathcal{\bar E}_{Disc}, \kappa }
\right)
\pbb{\kappa} 
\nonumber \\
&\leq
 \frac{G}{d}
\pbb{\mathcal{E}_{Disc}} +  
\frac{G}{d}
\sum_{\kappa =0}^\infty
 \pbb{ \mathcal{E}_{Den} |\mathcal{\bar E}_{Disc}, \kappa }\pbb{\kappa},
\end{align}
where (a) is obtained by conditioning on $\kappa$ which is the number of SNPs within a given segment of length $D$, and (b) is obtained by upper bounding $1-\pbb{\mathcal{\bar E}_{Disc} |\kappa}$ by $1$. 

The event $\mathcal{E}_{Disc}$ is independent of reads. Therefore, one can simply attain an upper bound on $\pbb{\mathcal{E}_{Disc}}$ using the fact that the number of SNPs in an overlap region of length $D-d$ has a {\it {Poisson}} distribution. Therefore,
\begin{align}
\pbb{\mathcal{E}_{Disc}}
&\stackrel{\text{(a)}}{\leq}
\sum_{n=0}^{\infty}
e^{-p\left(D-d\right)}\frac{\left[p\left(D-d\right)\right]^n}{n!}
\left(
1-
\left(1-\eta^n\right)^{\binom{M}{2}}
\right)
\nonumber \\
&=
\sum_{m=1}^{\binom{M}{2}}
\binom{\binom{M}{2}}{m}\left(-1\right)^{m-1}
e^{-p\left(D-d\right)\left(1-\eta^m\right)},
\end{align}
where (a) follows from that fact that $1-\left(1-\eta^n\right)^{\binom{M}{2}}$ is an upper bound on the probability that every two individuals are distinguishable based on a set of $n$ observed SNPs.

The event $\{\mathcal{E}_{Den} |\mathcal{\bar E}_{Disc}, \kappa \}$ depends on the reads as well as the algorithm used for denoising. We propose two algorithms in this regard. The first one, which leads to the optimal solution, is based on maximum likelihood and will be presented in Section \ref{sec:perfect_denoising}. This algorithm yields the best performance at the expense of a prohibitive computational complexity. In Section \ref{sec:spectral_denoising}, we present an algorithm which is motivated by random graph theory as an alternative approach demonstrating high performance during the experiments while retaining similar interesting asymptotic properties.

\subsection{Optimal Denoising Algorithm : Maximum Likelihood} \label{sec:perfect_denoising}

Evidently, decision making based on Maximum Likelihood (ML) is the optimal denoising algorithm which searches among all possible choices of the hidden SNP sets and chooses the one with the maximum probability of observation conditioned on the SNP sets.

Theorem \ref{thm:NoisyTheorem} guarantees that given the genomic contents of each two individuals are distinguishable in each segment, then denoising error associated with ML goes to zero when the sequencing depth goes to infinity. Moreover, it  has been proved that decoding error exponentially decays as the number of observations per segment is increased.

\begin{thm2}
\label{thm:NoisyTheorem}
Assume $M$ distinct and hidden SNP sets each having $\kappa$ SNPs. Then, there exist non-negative real functions $\mathcal{D}_1\left(\epsilon,\kappa\right),\ldots,\mathcal{D}_{M\kappa}\left(\epsilon,\kappa\right)$ such that
\begin{align}
\pbb{\mathcal{E}_{Den} |\mathcal{\bar E}_{Disc}, \kappa}
&\leq
\sum_{i=1}^{M\kappa}
\binom{M\kappa}{i}
e^{-\lambda M\left(L-D\right)\left(1-
e^{-\mathcal{D}_i\left(\kappa,\epsilon\right)}\right)}.
\end{align}
Also, we have $\mathcal{D}_i\left(\epsilon,\kappa\right)>0$ for all $i$, if and only if $0\leq\epsilon<0.5$.
\end{thm2}

\begin{proof}

\begin{figure}[t]
\centering
	\includegraphics[trim=2cm 8cm 5cm 4cm,clip,width=0.9\textwidth]{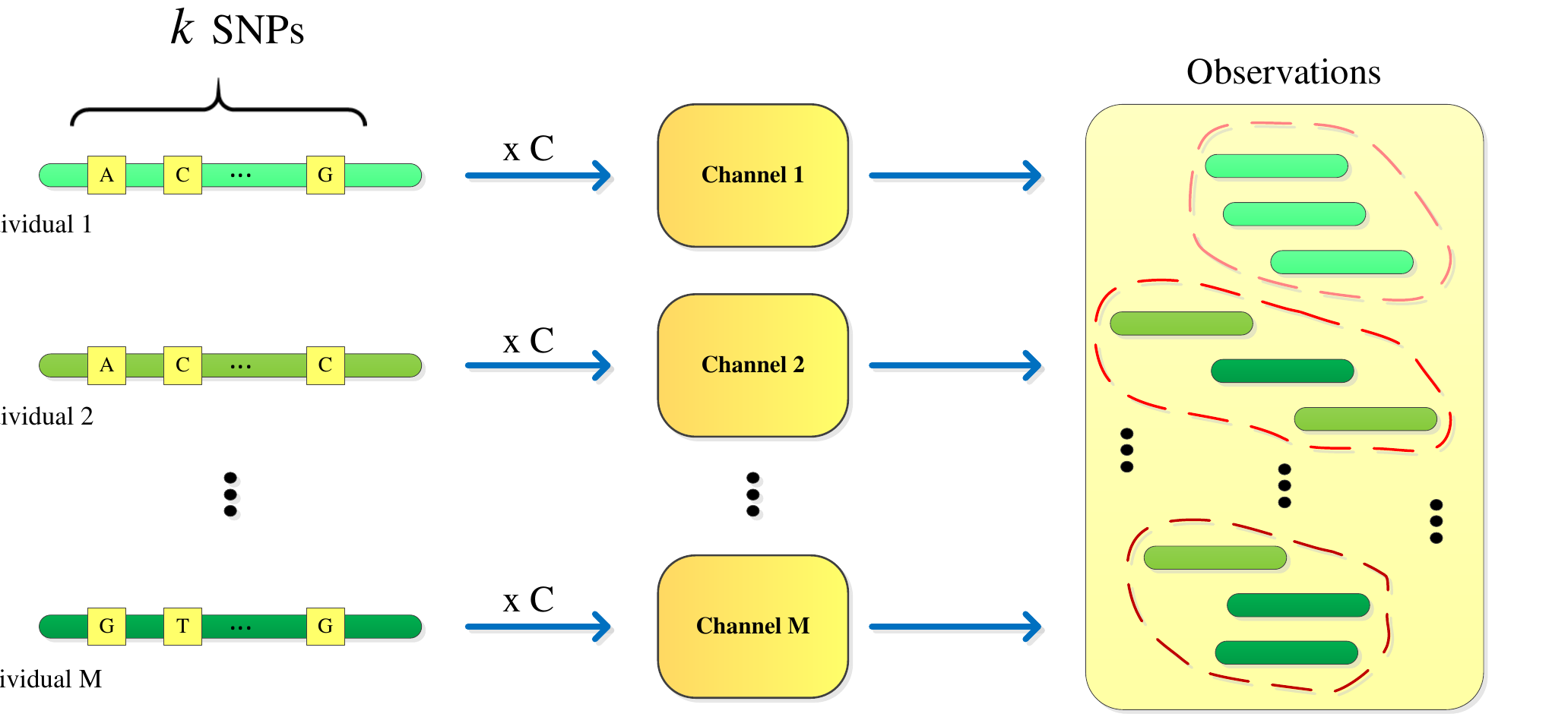}
	\caption{The problem model for block-wise noise removal. True underlying SNP sequences are transmitted through identical yet independent binary noisy channels for several times to form a pool of noisy observations.}
	\label{fig:NoisyChannels}
\end{figure}

The problem model for Theorem \ref{thm:NoisyTheorem} is depicted in Fig. \ref{fig:NoisyChannels}. Let us denote $\Phi_{\kappa}=\left\{\phi_1,\phi_2,\ldots,\phi_{2^\kappa}\right\},\phi_i\in\left\lbrace0,1\right\rbrace^{\kappa}$ as the set of all possible $2^\kappa$ sequences of a binary SNP segment with length $\kappa$. Also, let $\Psi_{\kappa}$ represent the set of all $\binom{2^{\kappa}}{M}$ possible subsets of $\Phi_{\kappa}$ that have a cardinality of $M$. In this regard, the true set of SNP sequences underlying the segment will be denoted by $\psi_T\in\Psi_{\kappa}$. It is straightforward that for any $\phi_i\in\Phi_{\kappa}$, the probability of observing $\phi_i$ in the observation pool can be written as:
\begin{equation}
\pbb{\phi_i\vert \psi_T}=\frac{\left(1-\epsilon\right)^\kappa}{M}\sum_{j=1}^{M}\left(\frac{\epsilon}{1-\epsilon}\right)^{\rho_{ij}}~,~i=1,2,\ldots,2^{\kappa}
\end{equation} 
where $\rho_{ij}$ denotes the {\it {Hamming}} distance between $\phi_i$ and the $j$th sequence in $\psi_T$.

We attempt to find $\psi_T\in\Psi_{\kappa}$ through a series of pairwise hypothesis testings. In other words, maximum likelihood decoding implies the following maximization problem:
\begin{equation}
\psi^{\left(\text{ML}\right)}_T=\argmax_{\psi\in\Psi_{\kappa}}~\prod_{i=1}^{n}\pbb{R_i\vert \psi},
\end{equation}
where $\boldsymbol{R}=\left\{R_1,R_2,\ldots,R_n\right\}$ denotes the set of $n$ observations.

An error in ML decoding occurs whenever for at least one $\psi\in\Psi_{\kappa}$ and $\psi\neq\psi_T$, we have $\pbb{\boldsymbol{R}\vert \psi_T}\leq\pbb{\boldsymbol{R}\vert \psi}$. As a result, one can use the union bound to upper bound the decoding error probability as follows. For all $i=1,2,\ldots,M\kappa$, assume $\Psi_{\kappa}^{\left(i\right)}$ a subset of $\Psi_{\kappa}$ such that every member of $\Psi_{\kappa}^{\left(i\right)}$ has a {\it {Hamming}} distance of $i$ from $\psi_T$. In other words, through each alternation of $i$ out of $M\kappa$ SNPs in $\psi_T$ we obtain  a member of $\Psi_{\kappa}^{\left(i\right)}$. Evidently, one can show that $\left\vert\Psi_{\kappa}^{\left(i\right)}\right\vert\leq\binom{M\kappa}{i}$. In this regard, the error probability in decoding a segment consisting of $\kappa$ SNPs with $n$ observations, $\pbb{\mathcal{E}_{Den} |\mathcal{\bar E}_{Disc}, \kappa, n}$, can be bounded as
\begin{align}
\pbb{\mathcal{E}_{Den} |\mathcal{\bar E}_{Disc}, \kappa, n}
\leq
\sum_{i=1}^{M\kappa}\binom{M\kappa}{i}\max_{\psi_T\in\Psi_{\kappa},\atop\psi\in\Psi_{\kappa}^{\left(i\right)}}
\pbb{\pbb{\boldsymbol{R}\vert\psi_T} \leq \pbb{\boldsymbol{R}\vert\psi}}.
\end{align}
Also, we have
\begin{equation}
\pbb{
\pbb{\boldsymbol{R}\vert \psi_T}\leq
\pbb{\boldsymbol{R}\vert \psi}
}
=
\pbb{
\sum_{i=1}^{n}\log\left(\frac{\pbb{R_i\vert \psi_T}}
{\pbb{R_i\vert \psi}}\right)\leq 0
}.
\label{eq:MLUnionPartition}
\end{equation}
The inequality in \eqref{eq:MLUnionPartition} is based on the deviation of a sum of i.i.d. random variables around its mean. By using  Chernoff bound, one can attain the following inequality which upper bounds the denoising error probability:
\begin{align}
\pbb{\mathcal{E}_{Den} |\mathcal{\bar E}_{Disc}, \kappa, n }
&\leq
\max_{\psi_T\in\Psi_{\kappa}}
\sum_{i=1}^{M\kappa}
\binom{M\kappa}{i}
\exp\left(
-n
\min_{\psi\in \Psi^{\left(i\right)}_{\kappa}}
\sup_{s\in\mathbb{R}}
\log
\left[
\left(
\sum_{\phi\in\Phi_{\kappa}}
\frac{
\left(\pbb{\phi\vert \psi_T}\right)^{s+1}
}{
\left(\pbb{\phi\vert \psi}\right)^{s}
}
\right)^{-1}
\right]
\right)
\nonumber \\
&\triangleq
\sum_{i=1}^{M\kappa}
\binom{M\kappa}{i}
e^{-n\mathcal{D}_i\left(\epsilon,\kappa\right)}.
\label{eq:FinalMLBound}
\end{align}
It is easy to show that the minimal (worst) exponent of error corresponds to $\mathcal{D}_1\left(\kappa,\epsilon\right)$, since it considers all two distinct members of $\Psi_{\kappa}$ that have the minimum {\it {Hamming}} distance. Furthermore, we need to show that the exponent of error is positive when $\epsilon<0.5$. Lemma \ref{lemma:positiveExponent} and \ref{lemma:uniqueness} guarantee the positivity of error exponent under the above-mentioned conditions.
\begin{lemma}
Assume $P$ and $Q$ are two arbitrary distributions over a finite $\sigma$-filed. Then, we have:
\begin{equation}
\sup_{s\in\mathbb{R}}\log\left[\left(\sum_{i}\frac{P^{s+1}_i}{Q^s_i}\right)^{-1}\right]=
\log\left[\left(\sum_{i}\sqrt{P_i Q_i}\right)^{-1}\right]
\ge 0.
\label{eq:Lemma1eq}
\end{equation}
In addition, the equality holds only for $P=Q$.
\label{lemma:positiveExponent}
\end{lemma}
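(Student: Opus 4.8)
The plan is to remove the outer $\log[(\cdot)^{-1}]$ and reduce the statement to a one-dimensional minimization. Set $M(s)\triangleq\sum_i P_i^{s+1}Q_i^{-s}=\sum_i P_i\left(P_i/Q_i\right)^s$; since $x\mapsto-\log x$ is strictly decreasing on $(0,\infty)$, the left-hand supremum equals $-\log\inf_{s\in\reals}M(s)$. It therefore suffices to show (i) that $\inf_{s\in\reals}M(s)$ is attained at $s=-\tfrac12$, where $M\!\left(-\tfrac12\right)=\sum_i\sqrt{P_iQ_i}$ is the Bhattacharyya coefficient, and (ii) that this coefficient is at most $1$ with equality iff $P=Q$, which delivers the final bound $\ge0$ together with its equality condition.

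For step (i) I would first observe that $M(s)=\sum_i\exp\left(\log P_i+s\log(P_i/Q_i)\right)$ is a finite sum of log-affine, hence convex, functions of $s$; thus $M$ is smooth and convex on $\reals$ and any stationary point is its global minimizer. Differentiating gives $M'(s)=\sum_i P_i^{s+1}Q_i^{-s}\log(P_i/Q_i)$, so $s=-\tfrac12$ is the minimizer exactly when $M'\!\left(-\tfrac12\right)=\sum_i\sqrt{P_iQ_i}\,\log(P_i/Q_i)=0$, and the evaluation $M\!\left(-\tfrac12\right)=\sum_i\sqrt{P_iQ_i}$ is then immediate.

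Step (ii) is routine: by Cauchy--Schwarz, $\sum_i\sqrt{P_iQ_i}\le\sqrt{\sum_iP_i}\,\sqrt{\sum_iQ_i}=1$, and equality forces $\sqrt{P_i}\propto\sqrt{Q_i}$, which with $\sum_iP_i=\sum_iQ_i=1$ yields $P=Q$. Hence $\left(\sum_i\sqrt{P_iQ_i}\right)^{-1}\ge1$, its logarithm is nonnegative, and it vanishes only when $P=Q$.

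The main obstacle is the claim in step (i) that the convex curve $M$ is minimized precisely at $s=-\tfrac12$, i.e.\ that $M'\!\left(-\tfrac12\right)=0$. Note that one direction is free: evaluating at the single point $s=-\tfrac12$ already gives $\sup_s\log[M(s)^{-1}]\ge-\log\sum_i\sqrt{P_iQ_i}\ge0$, which secures the positivity of the error exponent that the paper actually needs. The exact location of the minimizer, however, is delicate, since for an arbitrary pair $P,Q$ the solution of $M'(s)=0$ generally drifts away from $-\tfrac12$. I would therefore try to establish the reflection identity $M(s)=M(-1-s)$, whose symmetry axis sits at $s=-\tfrac12$ and forces the minimum there; concretely this amounts to exhibiting an involution on the observation alphabet that interchanges $P$ and $Q$, under which the summands of $M'\!\left(-\tfrac12\right)$ cancel in pairs. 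Producing this pairing from the noisy-channel structure is the crux of the argument.
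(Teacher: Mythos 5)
Your proposal correctly proves everything the paper actually needs from this lemma, and it is honest about the one step that cannot be carried out in the stated generality. The part you establish---$\sup_{s}\log\left[M(s)^{-1}\right]\ge \log\left[\left(\sum_i\sqrt{P_iQ_i}\right)^{-1}\right]\ge 0$, the first inequality by evaluating at the single point $s=-\tfrac12$ and the second by Cauchy--Schwarz with equality iff $P=Q$---is exactly what Theorem \ref{thm:NoisyTheorem} uses (positivity of the exponents $\mathcal{D}_i$ whenever the conditional distributions differ, which Lemma \ref{lemma:uniqueness} guarantees). Your route also differs from the paper's in a useful way: you obtain the structure of the problem from convexity of $M(s)=\sum_iP_i\left(P_i/Q_i\right)^s$, so that stationarity alone would identify the global minimizer, whereas the paper first localizes the minimizer to $(-1,0)$ by bounding $M(s)\ge 1+\mathcal{D}_{\text{KL}}\left(P\Vert Q\right)$ for $s>0$ and $M(s)\ge 1+\mathcal{D}_{\text{KL}}\left(Q\Vert P\right)$ for $s<-1$, then disposes of the key step---that the stationary point sits at $s^*=-\tfrac12$---by citing Gallager for ``a broad range of finite distribution functions,'' and finishes with the same Cauchy--Schwarz argument you give.

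The crux you flag is a genuine defect of the lemma as stated, not a gap in your reasoning. For arbitrary $P,Q$ the minimizer is not at $-\tfrac12$: with $P=(0.9,0.1)$ and $Q=(0.5,0.5)$ one gets $M'\left(-\tfrac12\right)=\sqrt{0.45}\,\ln 1.8+\sqrt{0.05}\,\ln 0.2\approx 0.034\neq 0$, so by strict convexity $\inf_sM(s)<M\left(-\tfrac12\right)=\sum_i\sqrt{P_iQ_i}$, and the lemma's ``$=$'' must be weakened to ``$\ge$''; the paper's hedge (``broad range,'' not all) quietly concedes this. Your proposed repair is exactly the right one: an involution $\sigma$ of the alphabet with $P_{\sigma(i)}=Q_i$ and $Q_{\sigma(i)}=P_i$ yields the reflection identity $M(s)=M(-1-s)$, which together with convexity pins the minimum at $s=-\tfrac12$. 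Moreover, this symmetry is available precisely where the paper needs equality rather than inequality, namely in Lemma \ref{lemma:sampleDmin}: the worst-case pairs $\psi_T,\psi$ there are exchanged by an isometry of the Hamming cube (the transposition of the first two coordinates for $M=2$, the flip of the first coordinate for $M=3$), and since the noise acts independently and symmetrically on coordinates, this isometry induces the required involution on $\Phi_\kappa$. So your argument fully secures the positivity claim in Theorem \ref{thm:NoisyTheorem}, and your reflection identity is the missing justification for the $s=-\tfrac12$ evaluation in Lemma \ref{lemma:sampleDmin}; the equality for arbitrary $P,Q$ is simply false and no proof can close that part.
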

The proof of Lemma \ref{lemma:positiveExponent} is given in Appendix \ref{app:Dmin}. According to Lemma \ref{lemma:positiveExponent}, the error exponents in \eqref{eq:FinalMLBound} are strictly positive given that each two different members of $\Psi_{\kappa}$ result into different statistical distributions over $\Phi_{\kappa}$. Mathematically speaking:
\begin{lemma}
Assume $\psi,\psi'\in\Psi_{\kappa}$ and $\psi\neq\psi'$, then $\pbb{\phi\vert\psi}\neq\pbb{\phi\vert\psi'}$.
\label{lemma:uniqueness}
\end{lemma}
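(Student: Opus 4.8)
The plan is to recognize that each conditional law $\pbb{\cdot\vert\psi}$ is the image of a uniform mixing measure under one fixed, \emph{invertible} noise operator, so that distinct sets $\psi$ necessarily produce distinct observation distributions. Concretely, let $B$ denote the single-bit binary symmetric channel matrix $\left(\begin{smallmatrix}1-\epsilon & \epsilon\\ \epsilon & 1-\epsilon\end{smallmatrix}\right)$ and let $T\triangleq B^{\otimes\kappa}$ be its $\kappa$-fold Kronecker power, a $2^\kappa\times 2^\kappa$ matrix indexed by pairs $(\phi,v)\in\Phi_\kappa\times\Phi_\kappa$ with entries $T_{\phi,v}=\epsilon^{\rho}\left(1-\epsilon\right)^{\kappa-\rho}$, where $\rho$ is the Hamming distance between $\phi$ and $v$. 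First I would rewrite the given formula for $\pbb{\phi\vert\psi}$ as $\pbb{\phi\vert\psi}=\sum_{v\in\psi}\tfrac{1}{M}T_{\phi,v}$, i.e. $\pbb{\cdot\vert\psi}=T\,u_\psi$, where $u_\psi$ is the probability vector that equals $1/M$ on the $M$-element set $\psi$ and $0$ elsewhere.

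The heart of the argument is the invertibility of $T$. Since the determinant of a Kronecker power factorizes, $\det T=\left(\det B\right)^{\kappa 2^{\kappa-1}}$, and $\det B=\left(1-\epsilon\right)^2-\epsilon^2=1-2\epsilon$. For every $\epsilon<0.5$ this is strictly positive, so $T$ is nonsingular and the linear map $u\mapsto Tu$ is injective; consequently $T\,u_\psi=T\,u_{\psi'}$ forces $u_\psi=u_{\psi'}$.

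It then remains to translate equality of the mixing vectors back into equality of the sets. Because $u_\psi$ and $u_{\psi'}$ are uniform on sets of the same cardinality $M$, each has support exactly $\psi$ (respectively $\psi'$) with constant mass $1/M$ there; hence $u_\psi=u_{\psi'}$ implies $\psi=\psi'$. Taking the contrapositive, $\psi\neq\psi'$ yields $\pbb{\cdot\vert\psi}\neq\pbb{\cdot\vert\psi'}$, which is the claim.

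I expect the only genuinely delicate point to be the invertibility step, and in particular keeping careful track of where $\epsilon<0.5$ is used: at $\epsilon=0.5$ the matrix $B$, and therefore $T$, is singular, every mixture collapses to the uniform law on $\Phi_\kappa$, and identifiability genuinely fails --- exactly consistent with the ``if and only if'' already recorded for the error exponents. A useful sanity check, and an independent route should the determinant bookkeeping become cumbersome, is the Walsh--Hadamard (Fourier) picture on $\mathbb{Z}_2^\kappa$: there $T$ is diagonalized with eigenvalues $\left(1-2\epsilon\right)^{\left\vert S\right\vert}$ over $S\subseteq\left\{1,\dots,\kappa\right\}$, making both the invertibility for $\epsilon<0.5$ and its breakdown at $\epsilon=0.5$ transparent.
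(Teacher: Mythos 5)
Your proof is correct, and it takes a genuinely different route from the paper's. The paper argues by contradiction at the level of the Hamming-distance formula: setting $x=\epsilon/(1-\epsilon)$, it writes the hypothesized equality of the two distributions as the system $\sum_{j=1}^{M} x^{\rho_{ij}}-x^{\rho'_{ij}}=0$ over all $2^{\kappa}$ observation sequences, asserts that these constraints can only hold if all polynomial coefficients vanish, and then matches the $x^{0}$ terms to conclude that every element of $\psi$ must also lie in $\psi'$, whence $\psi=\psi'$ by equal cardinality. Your argument instead packages the noise as the fixed linear operator $T=B^{\otimes\kappa}$ acting on the uniform mixing vector $u_{\psi}$, and reduces identifiability to the nonsingularity of $T$, via $\det T=(1-2\epsilon)^{\kappa 2^{\kappa-1}}\neq 0$ for $\epsilon<1/2$ (or, equivalently, the Walsh--Hadamard eigenvalues $(1-2\epsilon)^{|S|}$). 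What your approach buys: it is rigorous for each \emph{fixed} $\epsilon<1/2$, whereas the paper's coefficient-vanishing step is delicate --- a polynomial can vanish at one particular $x$ without having zero coefficients, so the paper's deduction really needs an argument about the identity holding on an infinite set of $x$ values or about the limited number of common roots across equations, which is only sketched. Your route also generalizes immediately to non-uniform mixtures and to any channel with a nonsingular transition matrix, and the eigenvalue picture makes the breakdown at $\epsilon=1/2$ transparent, consistent with the ``if and only if'' attached to the error exponents in Theorem \ref{thm:NoisyTheorem}. What the paper's route buys is elementarity: it needs no linear algebra beyond counting terms, and it stays entirely within the Hamming-distance representation of $\pbb{\phi\vert\psi}$ already in use there.
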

The proof of Lemma \ref{lemma:uniqueness} is discussed in Appendix \ref{app:Dmin}. We have also derived analytical formulations for the main exponent, i.e. $\mathcal{D}_1\left(\epsilon,\kappa\right)$ for special cases of $M=2$ and $M=3$. Surprisingly, it can be shown that the main error exponent does not depend on $\kappa$ and is only a function of $M$ and $\epsilon$.
\begin{lemma}
The main error exponent $\mathcal{D}_1$ for the special cases of $M=2$ and $3$ has the following analytical formulations:
\begin{equation}
e^{-\mathcal{D}_1}=\left\{\begin{array}{lc}
\frac{1}{2}\left(1-2\sqrt{\epsilon\left(1-\epsilon\right)}\right) & M=2
\\
\frac{2}{3}\sqrt{1+\epsilon\left(1-\epsilon\right)}
\left(
\sqrt{2\epsilon\left(1-\epsilon\right)+\epsilon^2}
+
\sqrt{2\epsilon\left(1-\epsilon\right)+\left(1-\epsilon\right)^2}
\right)
&
M=3
\end{array}\right..
\end{equation}
\label{lemma:sampleDmin}
\end{lemma}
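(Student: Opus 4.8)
The plan is to compute $\mathcal{D}_1$ through its Bhattacharyya characterisation and then to show that the resulting coefficient collapses to a quantity depending only on $M$ and $\epsilon$. First I would apply Lemma~\ref{lemma:positiveExponent} to the inner optimisation defining each $\mathcal{D}_i$ in \eqref{eq:FinalMLBound}: the supremum over $s$ is attained at $s=-\tfrac12$, so that the worst (minimal) exponent among the single-SNP competitors is
\begin{equation}
e^{-\mathcal{D}_1}=\max_{\psi\in\Psi_{\kappa}^{\left(1\right)}}~\sum_{\phi\in\Phi_{\kappa}}\sqrt{\pbb{\phi\vert\psi_T}\,\pbb{\phi\vert\psi}},
\nonumber
\end{equation}
namely the largest Bhattacharyya coefficient between the observation law $P=\pbb{\cdot\vert\psi_T}$ of the true configuration and the law $Q=\pbb{\cdot\vert\psi}$ of a competitor obtained by flipping a single SNP of a single individual. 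Lemma~\ref{lemma:uniqueness} guarantees $P\neq Q$, so by Lemma~\ref{lemma:positiveExponent} each such coefficient is strictly below $1$ when $\epsilon<\tfrac12$.

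The next step is the reduction that removes the dependence on $\kappa$. Writing $\pbb{\phi\vert\psi}=\tfrac1M\sum_{j}\prod_{t}w(\phi_t\vert\cdot)$ as a uniform mixture of $M$ product binary-symmetric channels, consider any SNP position $t$ at which all $M$ true sequences carry the same base. There the single-coordinate factor $w(\phi_t\vert\cdot)$ is common to every branch and hence factors out of both $P$ and $Q$; summing $\sqrt{PQ}$ over $\phi_t$ gives $\sum_{\phi_t}w(\phi_t\vert\cdot)=1$, so such a position contributes a unit multiplicative factor to the Bhattacharyya sum and may be deleted. Consequently the coefficient is governed only by $\epsilon$ together with the pattern of the \emph{polymorphic} positions (the flipped coordinate and the coordinates where the retained sequences genuinely disagree), and \emph{not} by the total length $\kappa$. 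Optimising over these patterns therefore produces a value that is $\kappa$-free, which is exactly the asserted invariance.

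With the configuration trimmed to its canonical polymorphic core, I would evaluate the finite sum $\sum_\phi\sqrt{PQ}$ in closed form. For $M=2$ the mixture has two branches, the core is small, and the sum reduces to a few monomials in $\epsilon$ and $1-\epsilon$ that consolidate into the stated $M=2$ expression; a clean way to organise the algebra is to note that $1\mp2\sqrt{\epsilon(1-\epsilon)}=(\sqrt{1-\epsilon}\mp\sqrt{\epsilon})^2$. For $M=3$ the same trimming leaves three branches and a bounded set of coordinate patterns; enumerating them and grouping the resulting radicals yields the compact $M=3$ formula, whose two inner square roots correspond to the two ways the flipped base can align with the bases of the two unchanged individuals.

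I expect the main obstacle to be twofold. First, because the mixture does not factor across coordinates, one cannot optimise position by position, so correctly identifying the Bhattacharyya-maximising polymorphic pattern --- and proving it is extremal --- is the delicate point, and this is where the claim that the surviving core is minimal must be justified carefully. Second, the algebraic consolidation for $M=3$ is heavy bookkeeping: several coordinate patterns contribute overlapping square-root terms that must be combined into a single closed form. The conceptual content lies entirely in the first two steps; checking the endpoint $e^{-\mathcal{D}_1}\to(M-1)/M$ as $\epsilon\to0$ offers a convenient sanity test on the final expressions.
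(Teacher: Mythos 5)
Your proposal follows the paper's own proof essentially step for step: Lemma \ref{lemma:positiveExponent} gives the $s=-\tfrac{1}{2}$ Bhattacharyya characterization, the non-polymorphic coordinates are summed out (the paper accomplishes this through the binomial sums $\sum_{h}\binom{\kappa-2}{h}x^{h}$ over hypercube shells, which is your per-coordinate factorization in different notation), and the worst case is evaluated in closed form on adjacent configurations sharing all but one sequence, e.g. $\psi_T=\{00\boldsymbol{0},10\boldsymbol{0}\}$ versus $\psi=\{00\boldsymbol{0},01\boldsymbol{0}\}$ for $M=2$; like you, the paper asserts rather than proves that this configuration is extremal.

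One concrete warning, however: the consolidation step, done honestly, lands on the \emph{plus} branch of your identity. For $M=2$ the four-vertex enumeration gives
\begin{equation}
e^{-\mathcal{D}_1}
=\frac{1}{2}\left(\sqrt{\epsilon}+\sqrt{1-\epsilon}\right)^{2}
=\frac{1}{2}+\sqrt{\epsilon\left(1-\epsilon\right)},
\nonumber
\end{equation}
which is exactly what the paper's appendix derives, and which is the \emph{complement} of the formula displayed in the lemma: the displayed quantity is $1-e^{-\mathcal{D}_1}$, the factor that actually enters the exponent $\lambda M\left(L-D\right)\left(1-e^{-\mathcal{D}_1}\right)$ of Theorem \ref{thm:NoisyTheorem}. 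Similarly, for $M=3$ the computation produces the prefactor $\sqrt{1-\epsilon\left(1-\epsilon\right)}$, not $\sqrt{1+\epsilon\left(1-\epsilon\right)}$. Your sanity check at $\epsilon\to0$ cannot detect this, since both versions give $\left(M-1\right)/M$ there; the informative check is $\epsilon\to\tfrac{1}{2}$, where a Bhattacharyya coefficient between identical mixtures must equal $1$ --- the corrected expressions do, while the printed ones give $0$ (for $M=2$) and a value exceeding $1$ (for $M=3$). So do not force your algebra to reproduce the statement as printed; it will (correctly) reproduce the appendix computation instead.
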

Again, the interested reader can find the proof of Lemma \ref{lemma:sampleDmin} in Appendix \ref{app:Dmin}. The procedure used for obtaining analytic formulation of $\mathcal{D}_1$ can be used for cases where $M>3$. Moreover, it is easy to show that for all $M>2$, we have:
\begin{equation}
\lim_{\epsilon\rightarrow 0}\mathcal{D}_1\left(\epsilon\right)=\log\left(1+\frac{1}{M-1}\right),
\end{equation}
and $\mathcal{D}_1\left(\epsilon\right)$ is a decreasing function of $\epsilon$. According to Lemma \ref{lemma:positiveExponent}, when $\epsilon=0.5$ then $\mathcal{D}_1=0$.

Inequality \eqref{eq:FinalMLBound} should be marginalized with respect to $n$ in order to attain a formulation for $\pbb{\mathcal{E}_{Den} |\mathcal{\bar E}_{Disc}, \kappa}$. $n$ indicates the number of DNA reads that cover a segment of length $D$, which has a {\it {Poisson}} distribution with parameter $\lambda M\left(L-D\right)$:
\begin{align}
\pbb{\mathcal{E}_{Den} |\mathcal{\bar E}_{Disc}, \kappa}
&\leq
\sum_{i=1}^{M\kappa}
\binom{M\kappa}{i}
\sum_{n=0}^{\infty}
e^{-\lambda M\left(L-D\right)}\frac{\left(\lambda M\left(L-D\right)\right)^n}{n!}
e^{-n\mathcal{D}_i\left(\epsilon,\kappa\right)}
\nonumber \\
&=
\sum_{i=1}^{M\kappa}
\binom{M\kappa}{i}
e^{-\lambda M\left(L-D\right)\left(1-
e^{-\mathcal{D}_i\left(\epsilon,\kappa\right)}\right)}.
\end{align}
This completes the proof.
\end{proof}

\subsection{Spectral Denoising Algorithm}\label{sec:spectral_denoising}

Although the algorithmic approach described in Theorem \ref{thm:NoisyTheorem} reaches the optimal result with respect to the sufficient conditions described in Theorem \ref{thm:MainNoisyTheorem}, however, it is NP-complete with respect to $\kappa$ as it requires an exhaustive search among all $\binom{2^{\kappa}}{M}$ members of $\Psi_{\kappa}$. This amount of complexity is not affordable in real-world applications. Motivated by community recovery techniques from random graphs, we propose an alternative denoising framework which requires much less computational burden for moderate values of $L$ and $\lambda$ and leads to a very good performance in practice.

The whole problem of SNP block denoising can be viewed from a community detection perspective. In this regard, each noisy observation of a SNP block denotes a node in a weightless and undirected random graph $\mathcal{G}=\left(V,E\right)$, where $V=\left\{R_1,\cdots,R_n\right\}$ represents the vertices of the graph. The set of edges $E$ is obtained from the adjacency matrix of $\mathcal{G}$ denoted by $\mathcal{A}$ which is constructed as follows.  

Let ${\boldsymbol{X}}$ denote a $n\times\kappa$ matrix corresponding to $n$ noisy observations of $\kappa$ SNPs. Without loss of generality, we have assumed ${\boldsymbol{X}}\in\left\{-1,1\right\}^{n\times \kappa}$, where $-1$ denotes the presence of major allele. Let us define the sample cross-correlation matrix as
\begin{equation}
\boldsymbol{C}\triangleq \frac{1}{\kappa}{\boldsymbol{X}}{\boldsymbol{X}}^T.
\end{equation}
Then, we define:
\begin{gather}
{\mathcal{A}}_{i,j}=\left\lbrace\begin{array}{lc}
1 & C_{i,j} \ge \tau_c
\\[2mm]
0 & {\text {o.w.}}
\end{array}\right.
\nonumber \\
i,j=1,2,\ldots,n,
\label{eq:GraphConstructFormula}
\end{gather}
where $\tau_c$ is a predefined threshold.  

Let us picture the SNP blocks belonging to an individual as a community in the graph $\mathcal{G}$. Therefore, the ultimate goal is to partition $V$ into $M$ communities, $\left\lbrace V_1,V_2,\ldots,V_M\right\rbrace$, where $V_m$ represents the set of nodes (equivalently SNP blocks) that belong to the $m$th individual. Equivalently, we wish to find $\mathcal{F}:\left\lbrace1,2,\ldots,n \right\rbrace\rightarrow\left\lbrace1,2,\ldots,M\right\rbrace$ which maps each observation to its corresponding individual. When the correct $\mathcal{F}$ is provided, one can denoise the observations via SNP-wise majority voting among those nodes that belong to particular individuals. Spectral techniques in Random Graph Theory have shown promising performances in community detection. In particular, we will make use of  the algorithm presented by \cite{mcsherry2001spectral} which works very well in practice. 

In order to analyze the algorithm, we assume that  $\mathcal{G}$ is generated based  on Erd\"{o}s-R\'{e}nyi graphs with inter (resp. intra) community connection probability of $a$ (resp. $b$). In this case,   \cite{mcsherry2001spectral} has shown that the hidden mapping function $\mathcal{F}$ can be completely recovered with a probability less than
\begin{equation}
1-
n\exp\left(\frac{-n\left(a-b\right)^2}{a c^2  }\right),
\label{eq:McSherryLB}
\end{equation}
where, $c$ is a constant depending on the algorithm employed for community detection. \cite{mcsherry2001spectral} has also shown that for large $n$ the lower bound in \eqref{eq:McSherryLB} is achievable via a spectral technique based on eigen-decomposition of graph adjacency matrix $\mathcal{A}$.

We obtain an upper bound on  $b$, and a lower bound on $a$ as follows. Let $\boldsymbol{\nu}=\left\lbrace\nu_{i,j}\right\rbrace\in\left\{0,1,\ldots,\kappa\right\}^{M\times M}$ denote the sequence difference matrix, where $\nu_{i,j}$ represents the {\it {Hamming}} distance between the $i$th and the $j$th individuals. We define $\nu_{\min}$ as the minimal non-diagonal entry in $\boldsymbol{\nu}$. It is easy to show that 
\begin{align}
\mathbb{E}\left\lbrace
C_{i,j}
\right\rbrace
=&
\left(1-\frac{2\nu_{\mathcal{F}\left(i\right),\mathcal{F}\left(j\right)}}{\kappa}\right)\left(1-2\epsilon\right)^2.
\label{eq:ApproxGaussianPDF}
\end{align}
From $\mathbb{E}\left\lbrace
C_{i,j}
\right\rbrace$, we choose the threshold in edge connectivity as 
$\tau_c=\left(1-2\epsilon\right)^2\left(1-\frac{\nu_{\min}}{\kappa}\right)$. Since we are conditioned on the discrimination among individuals, $\nu_{\min} \in \{1,\cdots,\kappa\}$. The worst case analysis can be carried out by simply setting $\nu_{\min}=1$. On the other hand, the average case analysis is carried out by setting $\nu_{\min}=\kappa (1-\eta)$. The average case analysis is valid for scenarios where $\kappa$ is large. 

Let us denote $e_1$ and $e_2$ as the events in which two nodes that belong to the same community are not connected, and two nodes that belong to different communities become connected, respectively. Since entries of cross-correlation matrix are sum of independent random variables, $a$ and $b$ can be bounded by Chernoff-Hoeffding theorem \cite{schmidt1995chernoff}:
\begin{align}
a
&\geq
1- P_e\triangleq
1- \exp\left(-\frac{\nu^2_{\min}}{\kappa}\left(1-2\epsilon\right)^4\right)
\nonumber \\
b
&\leq
\frac{1}{M\left(M-1\right)}\sum_{i,j=1\atop i\neq j}^{M}
\exp\left(
-\frac{\left(2\nu_{i,j}-\nu_{\min}\right)^2}{\kappa}\left(1-2\epsilon\right)^4
\right)\le P_e.
\end{align}
It is easy to show that if $b \leq 0.5 \leq a$, then the probability of reconstruction in \eqref{eq:McSherryLB} can be bounded from above by 
\begin{equation}
1-
n\exp\left(\frac{-n\left(1-2P_e(\kappa)\right)^2}{ c^2 \left( 1- P_e(\kappa)\right)  }\right).
\end{equation}

In order to correctly cluster the read set in the asymptotic regime, i.e. $n\rightarrow\infty$, it suffices that the following condition holds:
\begin{equation}
P_e(\kappa)=
\exp\left(
-\frac{\nu^2_{\min}}{\kappa}\left(1-2\epsilon\right)^4
\right)
<
\frac{1}{2}
\end{equation}
which simplifies to the following inequality:
\begin{equation}
\epsilon<
\frac{1}{2}-\frac{1}{2}\sqrt[4]{\frac{\kappa\log 2}{\nu^2_{\min}}}
\xrightarrow{\kappa\gg 1}
\frac{1}{2}\left(1-\sqrt[4]{\frac{\log 2}{\kappa\left(1-\eta\right)^2}}\right).
\label{eq:spectralErrorIneq}
\end{equation}
As a result, the maximum affordable error rate in (\ref{eq:spectralErrorIneq}) converges to $\epsilon<0.5$ as long as $\kappa$ is chosen sufficiently large. Given a procedure to perfectly cluster the read set, denoising can be performed by taking SNP-wise majority vote among those observations that belong to the same individual. In this way, one can arbitrarily reduce the error rates if a sufficiently large number of independent noisy reads is provided. For finite $n$, probability of error event due to majority voting, denoted by $e_{mv}$, can be upper bounded via Chernoff bound as follows:
\begin{equation}
\pbb{e_{mv}}\leq
M\kappa \exp\left({\frac{-n}{8\epsilon\left(1-\epsilon\right)}}\right).
\end{equation}
Since the error event in spectral denoising is the union of error in community detection and majority voting, and by assuming that the lower bound in \eqref{eq:McSherryLB} is achievable, one can simply write:
\begin{align}
\pbb{\mathcal{E}_{Den} \vert\mathcal{\bar E}_{Disc}, \kappa}
&\leq
e^{-\lambda M\left(L-D\right)}
\sum_{n=0}^{\infty}
\frac{\left(\lambda M\left(L-D\right)\right)^n}{n!}
\left(
ne^{-n\zeta\left(\kappa\right)}
+
M\kappa \exp\left({\frac{-n}{8\epsilon\left(1-\epsilon\right)}}\right)
\right)
\nonumber \\
&=
\lambda M\left(L-D\right)
e^{-\zeta\left(\kappa\right)}
e^{-\lambda M\left(L-D\right)\left(1-e^{-\zeta\left(\kappa\right)}\right)}
+
M\kappa e^{-\lambda MD\left(1-\exp\left(\frac{-1}{8\epsilon\left(1-\epsilon\right)}\right)\right)},
\end{align}
where $\zeta\left(\kappa\right)$ is defined as follows:
\begin{equation}
\zeta\left(\kappa\right)\triangleq
\frac{\left(1-2P_e\left(\kappa\right)\right)^2}{c^2\left(1-P_e\left(\kappa\right)\right)}.
\label{eq:ZetaDef}
\end{equation}

\subsection{Assembly Error: Asymptotic Analysis}
\label{sec:noisyasymptot}

So far we have derived upper bounds on the error probabilities associated to discrimination and denoising conditions, i.e. $\pbb{\mathcal{E}_{Disc}}$ and $\pbb{\mathcal{E}_{Den}}$, respectively. The derived bounds hold for any $0<d\leq D\leq L$. In this part, we will integrate the obtained results in order to propose appropriate upper bounds on the  assembly error rate. The achieved error upper bounds in addition to assembly regions are depicted for both maximum likelihood and spectral denoising algorithms. Also, we have analyzed  the asymptotic behavior of our bounds under certain circumstances.
\subsubsection{Maximum Likelihood Denoising}

One can combine the attained results for the maximum likelihood (ML) denoising to obtain an upper bound on the assembly error probability in the noisy regime $\pbb{\mathcal{E}_N}$ as follows:
\begin{align}
\pbb{\mathcal{E}_N}
&\stackrel{\text{(ML)}}{\leq}
\pbb{\mathcal{E}_{Disc}}+
\sum_{\kappa=0}^{\infty}
\pbb{\mathcal{E}_{Den}\vert \bar{\mathcal{E}}_{Disc},\kappa}
\pbb{\kappa}
\nonumber \\
&\leq
\inf_{D,d}~
\frac{G}{d}\left\{
\sum_{m=1}^{\binom{M}{2}}
\binom{\binom{M}{2}}{m}\left(-1\right)^{m-1}
e^{-p\left(D-d\right)\left(1-\eta^m\right)}+
\right.
\nonumber \\
&
\hspace{1.9cm}
\left.
\sum_{i=1}^{M\kappa}
\sum_{\kappa=0}^{\infty}
\frac{\left(pD\right)^{\kappa}}{\kappa!}e^{-pD}
\binom{M\kappa}{i}
e^{-\lambda M\left(L-D\right)\left(1-e^{-\mathcal{D}_i\left(\epsilon,\kappa\right)}\right)}
\right\}
\label{eq:NoisyUpperBoundML}
\end{align}
where the minimization is constrained to $0\leq d<D\leq L$. Evidently, $\pbb{\kappa}$ is supposed to be {\it {Poisson}} distribution with parameter $pD$. In the following, we first derive the optimal $D^*$ and $d^*$ parameters which minimize the upper bound in the asymptotic regime.

In fact, when $L\gg1/p\left(1-\eta\right)$, the term corresponding to $\pbb{\mathcal{E}_{Disc}}$ can be consistently approximated by the first summand in the summation of \eqref{eq:NoisyUpperBoundML}, i.e. $m=1$. Same argument holds for the second term corresponding to denoising error bound in \eqref{eq:NoisyUpperBoundML}. As a result, the simplified upper bound on $\pbb{\mathcal{E}_N}$ in the asymptotic case can be written as:
\begin{equation}
\pbb{\mathcal{E}_N}\leq
\inf_{D,d}~
\frac{G}{d}\left\{
\binom{M}{2}e^{-p\left(1-\eta\right)\left(D-d\right)}
+
pMD
e^{-\lambda M\left(L-D\right)\left(1-e^{-\mathcal{D}_1\left(\epsilon\right)}\right)}
\right\}.
\end{equation}

An intuitive investigation of the upper bound in \eqref{eq:NoisyUpperBoundML} reveals that minimization with respect to parameters $D$ and $d$ has a non-trivial solution since all the terms $d$, $D-d$ and $L-D$ must become large in order to lower the error. Taking derivatives with respect to $d$ and $D$ and solving for the equations result in the following approximations for $D^*$ and $d^*$:
\begin{align}
D^* &\simeq
\frac{L + 
\frac{1}
{\lambda M\left(1-e^{-\mathcal{D}_1\left(\epsilon\right)}\right)}
\log\left(
\frac{\left(M-1\right)\left(1-\eta\right)}{2\left(1+ML\lambda\right)}
\right)
}
{1+\frac{p\left(1-\eta\right)}{\lambda M\left(1-e^{-\mathcal{D}_1\left(\epsilon\right)}\right)}}
\nonumber \\
d^*&\simeq
\frac{1}{p\left(1-\eta\right)}
\left(
1+
\frac{pD^* e^{-\lambda M\left(L-D^*\right)\left(1-e^{-\mathcal{D}_1\left(\epsilon\right)}\right)}}
{\left(\frac{M-1}{2}\right)e^{1-p\left(1-\eta\right)D^*}}
\right).
\label{eq:OptimalDd}
\end{align}
The relations imply that for large values of $L$ and $\lambda$, we have $D^*\rightarrow L$ and $d^{*}\rightarrow 1/p\left(1-\eta\right)$.

It should be highlighted that in the case of ML denoising, if one chooses a sufficiently large DNA read density $\lambda$, then the term associated with denoising goes to zero independent of sequencing error $\epsilon<0.5$. As a result, the upper bound of error solely depends on the discrimination condition, which for large $D$ is highly close to the upper bound of the noiseless case in Section \ref{sec:noiseless} when we choose $d^*\simeq 1/p\left(1-\eta\right)$.

\subsubsection{Spectral Denoising}

In the case of employing spectral denoising algorithm, one achieves the following upper bound on the error assembly $\pbb{\mathcal{E}_N}$:
\begin{align}
\pbb{\mathcal{E}_N}
&\stackrel{\text{(SD)}}{\leq}
\pbb{\mathcal{E}_{Disc}}+
\sum_{\kappa=0}^{\infty}
\pbb{\mathcal{E}_{Den}\vert \mathcal{E}^c_{Disc}}
\pbb{\kappa}
\nonumber \\
&\leq
\inf_{D,d}~
\frac{G}{d}\left\{
\sum_{m=1}^{\binom{M}{2}}
\binom{\binom{M}{2}}{m}\left(-1\right)^{m-1}
e^{-p\left(D-d\right)\left(1-\eta^m\right)}+
MDp e^{-\lambda MD\left(1-\exp\left(\frac{-1}{8\epsilon\left(1-\epsilon\right)}\right)\right)}+
\right.
\nonumber \\
&
\hspace{1.9cm}
\left.
\lambda M\left(L-D\right)
\sum_{\kappa=0}^{\infty}
e^{-pD}\frac{\left(pD\right)^\kappa}{\kappa!}
e^{-\zeta\left(\kappa\right)}
e^{-\lambda M\left(L-D\right)\left(1-e^{-\zeta\left(\kappa\right)}\right)}
\right\}
\label{eq:NoisyUpperBoundSD}
\end{align}
where (SD) represents the association with spectral denoising and $\zeta\left(\kappa\right)$ is defined as in \eqref{eq:ZetaDef}.

When $pD$ is sufficiently large, one can simply replace the averaging over $\kappa$ in \eqref{eq:NoisyUpperBoundSD} with $\kappa=pD$. Solving for optimal $D^*$ and $d^*$ in this case is more complicated, however, numerical results indicate close values to those obtained in the ML denoising case in \eqref{eq:OptimalDd}.


\section{Conclusions}
\label{sec:conclusion}
In this paper, the information theoretic limits of tagless pooled-DNA sequencing have been achieved and discussed. Pooled-DNA sequencing is gaining wide-spread attention as a tool for massive genetic sequencing of individuals. Moreover, the problem of Haplotype-Phasing, which is addressed in this paper as a special case of tagless pooled-DNA sequencing is an interesting area of research in many fields of bioinformatics. We have mathematically shown that the emergence of long DNA read technologies in recent years has made it possible to  phase the genomes of multiple individuals without relying on the information associated with linkage disequilibrium (LD). This achievement is critically important since assessment of LD information is expensive, and can only be exploited for the inference of DNA haploblocks, rather than whole genomes. Moreover, LD information are different from one specie to another; and even within a particular specie, from one sub-population to another one.  

We have derived necessary and sufficient conditions for whole genome assembly (phasing) when DNA reads are noiseless. Extensive theoretical analysis have been proposed to derive asymptotically tight upper and lower bounds on the assembly error rate for a given genetic and experimental setting. As a result, we have shown that genome assembly is impossible when DNA read length is lower than a specific threshold. Moreover, when DNA reads satisfy certain lengths and densities, then a simple greedy algorithm can attain the optimal result with $O\left(N\right)$ computations, where $N$ denotes the total number of DNA reads.

For the case of noisy DNA reads, a set of sufficient conditions for correct and unique assembly of all genomes are proposed whose error bounds fall close to the noiseless scenario when DNA read density is sufficiently large. We have employed two different decoding procedures to denoise DNA reads, denoted by Maximum Likelihood and Spectral denoising, respectively. Tight asymptotic upper bounds are provided for the former method, while an approximate analysis is given in the latter case which is based on recent advances in community detection literature.

In our future works we will focus on deriving both necessary and sufficient conditions in the noisy case to present tight error bounds, in addition to the optimal assembly algorithm. Moreover, the scenarios where SNP values are statisitcally linked (which resemble more realistic cases in real world applications) are among the other possible approaches for research in this area.
\section*{Appendix}
\appendix

\section{Exact Bridging Error Probability for Two Individuals}
\label{app:exactBridging}

In this section, we present the exact error probability in bridging of all identical regions for the case of two individuals. The formulation of $\pbb{\mathcal{E}_B}$ is based on mathematical analysis of bridging consecutive identical regions via Markovian random process models. In this regard, one should consider the arrivals of both DNA reads and discriminating SNPs to investigate the information flow from each discriminating SNP to its next. This procedure is described as a foregoing process which is described as follows.

We start with the first discriminating SNP in genome and denote it as the {\it {current SNP}}. Also, we denote the last DNA read in genome which includes the current SNP as the {\it {current read}}. In accordance with these definitions, one of the following arguments hold:
\begin{itemize}
\item[1)]
The current SNP does not exist.
\item[2)]
The current SNP exists, but the current read does not exist.
\item[3)]
Both the current SNP and read exist, and the current read contains the last discriminating SNP in genome as well.
\item[4)]
None of the above.
\end{itemize}
Obviously, occurrence of any of the conditions (1) or (3) indicates that the bridging condition holds, i.e. $\mathcal{E}_B$ does not occur. On the other hand, occurrence of condition (2) indicates a failure in bridging condition, since at least one identical region cannot be bridged. In the case of condition (4), let us denote the new current read as the last DNA read that starts strictly after the current read and includes the last discriminating SNP in the current read. In this regard, the new current read again falls into one of the conditions described so far. This procedure continues until bridging condition fails or is satisfied. In the following, we model the above-mentioned procedure via a pair of Markovian random processes, facilitating the mathematical formulation of $\pbb{\mathcal{E}_B}$.

Let us assume at least two discriminating SNPs exist in genome. Also, at least one DNA read exists that includes the first discriminating SNP. We denote the last read containing this SNP as the current read. Let us define two coupled and finite Markovian random processes $\ell_n$ and $d_n$, $n=1,2,\ldots,n_F$ as follows. Let $d_n$ be the distance of last discriminating SNP in the current read from the read's end position. In order to define $\ell_i$, we consider the last DNA read starting strictly after the current read which includes the last discriminating SNP in the current read. Then, $\ell_n$ represents the distance of this read from the current read's starting position. Based on these definitions, the probability of failure $P^{\left(n\right)}_{\text {Fail}},n=0,1,\ldots,n_F$ is defined as the probability that one cannot find a pair $\left(d_n,\ell_n\right)$, given that $\left(d_{n-1},\ell_{n-1}\right)$ exists. Therefore:
\begin{align}
P^{\left(n\right)}_{\text {Fail}}\vert d_{n-1},\ell_{n-1}=&~
e^{-p\left(1-\eta\right)\left(d_{n-1}+\ell_{n-1}\right)}+
\int_{0}^{d_{n-1}+\ell_{n-1}}p\left(1-\eta\right)e^{-p\left(1-\eta\right)x}
e^{-2\lambda\left(d_{n-1}+\ell_{n-1}-x\right)}{\text d}x
\nonumber \\
=&~\frac{
2\lambda e^{-p\left(1-\eta\right)\left(d_{n-1}+\ell_{n-1}\right)}-
p\left(1-\eta\right) e^{-2\lambda\left(d_{n-1}+\ell_{n-1}\right)}}{2\lambda-p\left(1-\eta\right)}.
\end{align}
The processes $d_n$ and $\ell_n$ terminate at step $n$ with probability $P^{\left(n\right)}_{\text {Fail}}$, otherwise they continue to the next step. At each step, $d_n$ and $\ell_n$ are sampled as follows:
\begin{align}
d_n~\sim~&\frac{p\left(1-\eta\right)e^{-p\left(1-\eta\right)d_n}}{1-e^{-p\left(1-\eta\right)\left(d_{n-1}+\ell_{n-1}\right)}}~,~0\leq d_n \leq d_{n-1}+\ell_{n-1}
\nonumber \\[2mm]
\ell_{n}\vert d_n~\sim~&
\frac{2\lambda e^{-2\lambda\left(L-d_{n}-\ell_n\right)}}
{1-e^{-2\lambda\left(\ell_{n-1}+d_{n-1}-d_n\right)}}
~,~
L-\ell_{n-1}-d_{n-1}\leq \ell_{n} \leq L-d_n.
\end{align}
If at the time of failure, all discriminating SNPs are covered by the sequence of current reads, then bridging condition is satisfied, otherwise it has failed. Therefore, the probability of error in bridging all identical regions can be formulated as:
\begin{equation}
\pbb{\mathcal{E}_B}=
\left(1-
e^{-Gp\left(1-\eta\right)}\left(1+Gp\left(1-\eta\right)\right)\right)
\left(1-\mathbb{E}_{L_R}\left\{\pbb{\sum_{n=1}^{n_F}\ell_n+L < L_R}\right\}\right),
\label{eq:exactBC}
\end{equation}
where $L_R$ indicates the distance between the first and the last discriminating SNPs. Although the relation in \eqref{eq:exactBC} does not give an explicit mathematical formulation for $\pbb{\mathcal{E}_B}$, it provides a rigorous numerical procedure to approximate the error probability of the bridging condition with an arbitrary high precision. Moreover, upper and lower bounds obtained in Section \ref{sec:noiseless} can be reattained by bounding the formulation in \eqref{eq:exactBC}.

\section{Proof of Lemmas in Theorem \ref{thm:NoisyTheorem}}
\label{app:Dmin}
In this section, we present the proofs of Lemmas \ref{lemma:positiveExponent}, \ref{lemma:uniqueness} and \ref{lemma:sampleDmin}.

\begin{proof}[Proof of Lemma \ref{lemma:positiveExponent}]
Maximization of \eqref{eq:Lemma1eq} implies the minimization of $J\left(s\right)\triangleq \sum_{i}P^{s+1}_iQ^{-s}_i$. For $s>0$, one can show the following inequality holds:
\begin{equation}
J\left(s\right)=\sum_{i}\frac{P_i^{s+1}}{Q_i^s}\ge
\sum_{i}P_i\left(1+\log\left(\frac{P_i}{Q_i}\right)\right)=
1+\mathcal{D}_{\text {KL}}\left(P\Vert Q\right)\ge 1.
\end{equation}
Also, again for $s<-1$ the following is true:
\begin{equation}
J\left(s\right)=\sum_{i}Q_i\frac{Q_i^{\vert s\vert-1}}{Q_i^{\vert s\vert -1}}\ge
1+\mathcal{D}_{\text {KL}}\left(Q\Vert P\right)\ge 1.
\end{equation}

As a result, the optimal $s$, denoted by $s^*$, which minimizes $J$ must be within the range $-1<s^*<0$. According to this assumption, derivative of $J$ with respect to $s$ at the optimal point results in the following equation:
\begin{equation}
\frac{\text{d} J\left(s\right)}{\text{d}s}\bigg\vert_{s=s^*}=
\sum_{i}P_i^{1-\vert s^*\vert}Q_i^{\vert s^*\vert}\log\left(\frac{P_i}{Q_i}\right)=0.
\label{eq:lemma1:derivative}
\end{equation}
It has been shown by \cite{gallager1968information} that for a broad range of finite distribution functions $P$ and $Q$, the solution of \eqref{eq:lemma1:derivative} with respect to $-1<s^*<0$ corresponds to $s^*=-1/2$. Consequently, we can use the Cauchy-Schwarz inequality to show that $\min_s~J\left(s\right)\leq1$:
\begin{equation}
\sum_{i}\sqrt{P_i Q_i}\leq
\sqrt{\left(\sum_{i}P_i\right)\left(\sum_{i}Q_i\right)}=1,
\end{equation}
where the equality holds if and only if $P_i=Q_i$ for all $i$.
\end{proof}
\begin{proof}[Proof of Lemma \ref{lemma:uniqueness}]
We will show that when $\psi, \psi'\in\Psi_{\kappa}$ and $\psi\neq\psi'$, then there exists $\phi_i\in\Phi_{\kappa}$ such that $\pbb{\phi_i\vert\psi}\neq\pbb{\phi_i\vert\psi'}$. The proof is by contradiction. Assume two sets $\psi,\psi'\in\Psi_{\kappa}$ that result in the same statistical distribution over all sequences in $\Phi_{\kappa}$. Mathematically speaking:
\begin{gather}
\pbb{\phi_i\vert \psi}=\pbb{\phi_i\vert \psi'}~,~\forall i\in\left\lbrace1,2,\ldots,2^{\kappa}\right\rbrace,
\end{gather}
or alternatively:
\begin{gather}
\sum_{j=1}^{M} x^{\rho_{ij}}-x^{\rho'_{ij}}=0,~\forall i\in\left\lbrace1,2,\ldots,2^{\kappa}\right\rbrace,
\label{eq:polyNomial}
\end{gather}
where $x\triangleq\frac{\epsilon}{1-\epsilon}$. $\rho_{ij}$ and $\rho'_{ij}$ represent the {\it {Hamming}} distances of $\phi_i$ from the $j$th SNP sequence in $\psi$ and $\psi'$, respectively. By assuming $x=1$ ($\epsilon=0.5$), all the equations in \eqref{eq:polyNomial} hold regardless of the choice of $\psi$ and $\psi'$. However, if $0\le\epsilon<0.5$ then $0\le x<1$. 

Each equation in \eqref{eq:polyNomial} is a polynomial of degree at most $\kappa$, and therefore has at most $\kappa$ roots. Acceptable roots must be real and fall in the interval $\left[0,1\right)$. Moreover, the roots must be common among all $2^{\kappa}$ equations. Consequently, one can deduce that the constraints hold only when all the coefficients of the polynomial become zero. For any $\phi_i\in\Phi_{\kappa}$, if $\phi_i\in \psi$ then one of the {\it {Hamming}} distances $\rho_{ij}$ becomes zero and one  $x^0=1$ term appears in the $i$th equation. Since the equation holds globally, i.e. for all $0\le x<1$, then a term $-x^0=-1$ must also appear in the summation. This implies that for one $j\in\left\{1,2,\ldots,M\right\}$, $\rho'_{ij}$ is zero implying $\phi_i\in\psi'$. Consequently, $\forall \phi_i\in \psi\Rightarrow\phi_i\in \psi'$ and since $\vert\psi \vert=\vert \psi' \vert=M$, we can deduce that $\psi=\psi'$. This result leads to the fact that each $\mathcal{D}\left(\epsilon\right)$ is strictly positive when $\epsilon<0.5$.
\end{proof}
\begin{proof}[Proof of Lemma \ref{lemma:sampleDmin}]
We calculate $\mathcal{D}_1\left(\epsilon\right)$ for $M=2$ and $3$ by using Lemma \ref{lemma:positiveExponent} and discussions made in Theorem \ref{thm:NoisyTheorem}. In this regard, $\mathcal{D}_1\left(\epsilon\right)$ can be formulated as
\begin{align}
\mathcal{D}_1\left(\epsilon\right)&=
\sup_{s\in\mathbb{R}}
\min_{\psi,\psi'\in\Psi_{\kappa}\atop \psi\neq\psi'}
\log
\left[
\left(
\sum_{\phi\in\Phi_{\kappa}}
\frac{
\left(\pbb{\phi\vert \psi}\right)^{s+1}
}{
\left(\pbb{\phi\vert \psi'}\right)^{s}
}
\right)^{-1}
\right]
\nonumber \\
&=
\min_{\psi,\psi'\in\Psi_{\kappa}\atop \psi\neq\psi'}
\log
\left[
\left(
\frac{\left(1-\epsilon\right)^{\kappa}}{M}
\sum_{i=1}^{2^{\kappa}}
\sqrt{\sum_{j,j'=1}^{M}
\left(\frac{\epsilon}{1-\epsilon}\right)^{\rho_{i,j}+\rho'_{i,j'}}}
\right)^{-1}
\right].
\label{eq:lemma3:Dmin}
\end{align}
In the following we compute an explicit formulation for \eqref{eq:lemma3:Dmin}.

\paragraph{Two individuals}
Evidently, the worst case analysis for $M=2$ which gives the minimal $\mathcal{D}_1\left(\epsilon\right)$ corresponds to a setting where $\psi_T$ is composed of two adjacent $\kappa$-ary SNP sequences and $\psi\in\Psi_{\kappa}^{\left(i\right)}$ differs only in one locus with $\psi_T$. From a geometric point of view, members of $\Phi_{\kappa}$ form the vertices of a $\kappa$-dimensional hyper-cube. A hyper-cube is a	 symmetric structure. Hence, without loss of generality, we can assume coordinations of $\psi_T$ and $\psi\in\Psi_{\kappa}^{\left(i\right)}$ as follows:
\begin{equation}
\psi_T=\left\{\begin{array}{c}
\left[0,0,\boldsymbol{0}_{1\times\left(\kappa-2\right)}\right]
\\[2mm]
\left[1,0,\boldsymbol{0}_{1\times\left(\kappa-2\right)}\right]
\end{array}\right\}\quad,\quad
\psi=\left\{\begin{array}{c}
\left[0,0,\boldsymbol{0}_{1\times\left(\kappa-2\right)}\right]
\\[2mm]
\left[0,1,\boldsymbol{0}_{1\times\left(\kappa-2\right)}\right]
\end{array}\right\},
\end{equation}
where $\boldsymbol{0}_{1\times\left(\kappa-2\right)}$ denotes a row vector of zeros with length $\kappa-2$.

In order to compute \eqref{eq:lemma3:Dmin}, we have to sum $\sqrt{\sum_{j,j'=1}^{2}x^{\rho_{i,j}+\rho'_{i,j'}}}$ over all vertices of hyper-cube, where $x\triangleq \frac{\epsilon}{1-\epsilon}$, and $\rho_{i,j}$ and $\rho'_{i,j'}$ denote the {\it {Hamming}} distances of the $i$th vertice from the $j$th sequence in $\psi_T$ and $j'$th sequence in $\psi$, respectively. 

Focusing on the last $\kappa-2$ coordinates of each vertice, $4\binom{\kappa-2}{h}$ vertices of the hyper-cube have a {\it {Hamming}} distance of $h$ with each of the sequences in $\psi_T$ and $\psi$  for all $h\in\left\{0,1,\ldots \kappa-2\right\}$. For the first two coordinates which separates the vertices of $\psi_T$ and $\psi$ from each other, one can simply count the differences to attain analytic values for $\rho_{i,j}$ and $\rho'_{i,j'}$. In this regard, it can be shown that:
\begin{align}
e^{-\mathcal{D}_1\left(\epsilon\right)}&=
\frac{\left(1-\epsilon\right)^{\kappa}}{2}
\sum_{h=0}^{\kappa-2}\binom{\kappa-2}{h}x^h\left(
\left(1+x\right)\sqrt{\left(1+x\right)\left(1+x\right)}+
2\sqrt{\left(1+x\right)\left(x+x^2\right)}
\right)
\nonumber \\
&=\frac{1}{2}+\sqrt{\epsilon\left(1-\epsilon\right)}.
\end{align}

\paragraph{Three individuals}
In the case of $M=3$, many of the arguments are the same to those of $M=2$ scenario. Therefore, one just needs to specify the worst case opponent subsets, i.e. $\psi_T$ and $\psi$ when there are three underlying individuals. Similar to the previous part and taking into account the symmetry property of $\kappa$-dimensional hyper-cubes, it can be shown that the worst case $\psi_T$ and $\psi$ can be chosen as follows:

\begin{equation}
\psi_T=\left\{\begin{array}{c}
\left[0,0,\boldsymbol{0}_{1\times\left(\kappa-2\right)}\right]
\\[2mm]
\left[1,0,\boldsymbol{0}_{1\times\left(\kappa-2\right)}\right]
\\[2mm]
\left[1,1,\boldsymbol{0}_{1\times\left(\kappa-2\right)}\right]
\end{array}\right\}\quad,\quad
\psi=\left\{\begin{array}{c}
\left[0,0,\boldsymbol{0}_{1\times\left(\kappa-2\right)}\right]
\\[2mm]
\left[1,0,\boldsymbol{0}_{1\times\left(\kappa-2\right)}\right]
\\[2mm]
\left[0,1,\boldsymbol{0}_{1\times\left(\kappa-2\right)}\right]
\end{array}\right\}.
\end{equation}
Based on the same arguments of the two individuals scenario, one can attain $\mathcal{D}_1\left(\epsilon\right)$ as follows:
\begin{gather}
e^{-\mathcal{D}_1\left(\epsilon\right)}=
\frac{\left(1-\epsilon\right)^{\kappa}}{3}
\sum_{h=0}^{\kappa-2}\binom{\kappa-2}{h}x^h\left(
\sqrt{\left(1+x+x^2\right)\left(1+2x\right)}+
\sqrt{\left(1+2x\right)\left(1+x+x^2\right)}+
\right.
\nonumber \\
\hspace{6.2cm}\left.
\sqrt{\left(2x+x^2\right)\left(1+x+x^2\right)}+
\sqrt{\left(1+x+x^2\right)\left(2x+x^2\right)}
\right)
\nonumber \\
\hspace{-1.3cm}
=
\frac{2}{3}\sqrt{1+\epsilon\left(1-\epsilon\right)}
\left(
\sqrt{2\epsilon\left(1-\epsilon\right)+\epsilon^2}
+
\sqrt{2\epsilon\left(1-\epsilon\right)+\left(1-\epsilon\right)^2}
\right)
\end{gather}

As can be seen, $\mathcal{D}_1\left(\epsilon\right)$ which is the dominant exponent of error for large DNA read densities, does not depend on $\kappa$, which simplifies many of the formulations in the asymptotic and total error analysis.

\end{proof}


\end{document}